\newcommand{\Abb}{\mathbb{A}}
\newcommand{\Ebb}{\mathbb{E}}
\newcommand{\Nbb}{\mathbb{N}}
\newcommand{\Vcal}{\mathcal{V}}
\newcommand{\Ecal}{\mathcal{E}}
\newcommand{\Gcal}{\mathcal{G}}
\newcommand{\PQ}{\text{Proj}_{Q}}
\newcommand{\Pcal}{\mathcal{P}}
\newcommand{\Hcal}{\mathcal{H}}
\newcommand{\Lcal}{\mathcal{L}}
\newcommand{\Ncal}{\mathcal{N}}
\def\thm@space@setup{%
  \thm@preskip=5cm plus 1cm minus 2cm
  \thm@postskip=\thm@preskip 
}
\newtheorem{assumption}{\textbf{Assumption}}
\newtheorem{definition}{\textbf{Definition}}
\newtheorem{proposition}{\textbf{Proposition}}
\newtheorem{theorem}{\textbf{Theorem}}
\newtheorem{remark}{\textbf{Remark}}
\begin{document}
%
\title{Robustness in Consensus Networks}
 \author{T. Sarkar, M. Roozbehani, M. A. Dahleh
 \thanks{T. Sarkar, M. Roozbehani, M. A. Dahleh are with the EECS
 Department, Massachusetts Institute of Technology, Cambridge, MA 02139 USA (e-mail: \texttt{tsarkar, mardavij,dahleh}@mit.edu).}
 }
\maketitle

\begin{abstract}
We consider the problem of robustness in large consensus networks that occur in many areas such as distributed optimization. Robustness, in this context, is the scaling of performance measures, \textit{e.g.:} $\Hcal_2$--norm, as a function of network dimension. We provide a formal framework to quantify the relation between such performance scaling and the convergence speed of the network. Specifically, we provide upper and lower bounds for the convergence speed in terms of robustness and discuss how these bounds scale with the network topology. The main contribution of this work is that we obtain tight bounds, that hold regardless of network topology. The work here also encompasses some results in convergence time analysis in previous literature.
\end{abstract}


\IEEEpeerreviewmaketitle

\section{Introduction}
The goal of this paper to develop a general framework to assess robustness in large consensus networks. Robustness quantifies how some performance measure scales with the network dimension. Large networks have become ubiquitous in many socioeconomic and engineering areas, \textit{e.g.}: finance, economics, transportation etc. Therefore, to guarantee graceful performance scaling as a function of network dimension and its interconnections, there is a need to develop a general framework for robustness analysis. We use the ideas introduced in~\cite{tuhin2016} and extend them to consensus networks. It is well known that convergence speeds of consensus networks are limited by the proximity of the second largest eigenvalue to unity. However, the precise relation between robustness and convergence speed is still unclear. It should, however, be clear intuitively that graceful performance scaling implies ``fast'' convergence. In this work our goal is to study this notion more precisely.

Many performance measures to capture dimension dependent scaling have been studied in~\cite{h2_norm_vol,fiedler_nonzero,daron_aggreg}. Specifically, in~\cite{h2_norm_vol}, $\Hcal_2$--norm based performance measures are studied. $\Hcal_{\infty}$ based performance scaling, known as harmonic instability, is studied in~\cite{fiedler_nonzero} for the case of one-dimensional transportation networks. A generalized framework for robustness of stable networks is provided in~\cite{tuhin2016}. The focus there has been limited to stable networks. The case of robustness in consensus networks is studied in~\cite{robust_consensus,siami2014systemic,siami2016fundamental,siami2017growing}. The work in~\cite{siami2014systemic,siami2016fundamental,siami2017growing} is limited to the case of undirected networks. A class of performance measures called schur convex systemic (SCS) are introduced there. These performance measures have the property that they are schur convex in the eigenvalues of the network. $\Hcal_2$--norm is used as a performance measure for analyzing directed networks in~\cite{robust_consensus}; however they do not show any relation between convergence speed and network robustness. Additionally, the analysis there is based on a non-unique reduction of the network, which is different from the framework we develop here. 

The convergence speeds for stochastic matrices has been widely studied (See~\cite{ipsen2011,mc_mixing,olshevsky2006,olshevsky2013} and references therein). Specifically,~\cite{olshevsky2006} studies with convergence speeds in consensus networks that are characterized by nodal degrees. The work in~\cite{olshevsky2006,olshevsky2013,olshevsky2008} also deal with the case when the network topology is time varying. A primary motivation of previous literature has been to analyze, \textit{for e.g.}, if a distributed consensus dynamics converges geometrically. Let us denote by $d_t$ as the distance to consensus, where $d_t < 1/16$ implies convergence. Further assume that $d_t \leq \alpha^t$, then it is clear that if $\alpha < 1$ the convergence is geometric. When $\alpha$ is a function of the network dimension then so does the convergence time, \textit{i.e.}, $t^{*}$ at which $d_{t^{*}} < 1/16$. We are specifically interested in the case when $\alpha$ varies with network dimension. The network architectures for which such dependence is known are limited, and the method of analyses do not extend to general structures.

In this work we seek to obtain a relation between the aforementioned convergence time and dimension dependent performance scaling in networks. In previous literature such scaling has been attributed to network interconnections and studies there have resulted in more ``robust'' network design. We believe that an understanding of robustness in the context of consensus networks will help in areas such as identification of links that slow convergence. Deriving from results in~\cite{tuhin2016}, we demonstrate that there exists a network projection which is equivalent to the consensus network. We will then show that the robustness of this projected network is closely related to the convergence time of its consensus counterpart. Such an alternate formulation is needed because a consensus network is marginally stable, and most performance measures require network stability. The main contribution of this work is that we provide a general set of tools that help us connect the convergence time of a consensus time to the ``convergence'' time (to origin) of a stable network, which can then be measured by a concrete algebraic quantity -- which we refer to as robustness. Further, the framework studied here is not limited by network architecture as in previous cases, and can be extended to general time varying or random networks.

The paper is organized as follows -- in section~\ref{sec:notation} we introduce some mathematical notation. Section~\ref{sec:Preliminaries} develops the model and reviews some results in consensus networks and network robustness. In Section~\ref{convergence_robustness} we state and prove our main result. Specifically, we show how convergence time can be interpreted as a measure of robustness. We present some examples in Section~\ref{sec:examples} that demonstrate the tightness of our bounds. Finally we conclude in Section~\ref{sec:conclusion}.

\section{Mathematical Notation}
\label{sec:notation}

\textbf{Matrix Theory:} A vector $v \in \mathbb{R}^{n \times 1}$ is of the form $\lbrack v_1, \ldots, v_n \rbrack^T$, where $v_i$ denotes the $i^{th}$ element, unless specified otherwise. The vector $\textbf{1}$ is the all $1$s vector of appropriate dimension; to specify the dimension we sometimes refer to it as $\textbf{1}_n$, where it is a $n \times 1$ vector. Similarly, we will refer to a $n \times n$ matrix $A$ as $A_n$ when we want to specify its dimension. For a matrix, $A$, we denote by $\rho(A)$ its spectral radius, and by $\sigma_i(A)$ the $i^{th}$ largest singular value of $A$. $I$ is the identity matrix of appropriate dimension. The $\Lcal_p$ norm of a matrix, $A$, is given by $||A||_p = \sup_{v} ||Av||_p/||v||_p$ and more generally, 
\[
||A||_{p \rightarrow q} = \sup_{||v||_{p} \leq 1} ||Av||_q
\]
Finally, $Q_n = I_n - \textbf{1}_n\textbf{1}^T_n/n$ is the projection matrix. A network is Schur stable (or simply stable) if $\rho(A) < 1$.

\textbf{Order Notation:} For functions, $f(\cdot), g(\cdot)$, we have $f(n) = O(g(n))$, when there exist constants $C, n_0$ such that $f(n) \leq C g(n)$ for all $n \in \Nbb > n_0$. Further, if $f(n) = O(g(n))$, then $g(n) = \Omega(f(n))$. For functions $g(\cdot), h(\cdot)$, we have $g(n) = \Theta(h(n))$ when there exist constants $C_1, C_2, n_1$ such that $C_1 h(n) \leq g(n) \leq C_2 h(n)$ for all $n \in \Nbb > n_1$. Finally, for functions $h_1(\cdot), h_2(\cdot)$, we have $h_1(n) = o(h_2(n))$ when $\lim_{n \rightarrow \infty} |h_1(n)/h_2(n)| = 0$. 

\textbf{Graph Theory:} A graph is the tuple $\Gcal = (\Vcal_{\Gcal}, \Ecal_{\Gcal}, w_{\Gcal})$, where $\Vcal_{\Gcal} = \{v_1, v_2, \ldots, v_n\}$ represents the set of nodes and $\Ecal_{\Gcal} \subseteq \Vcal_{\Gcal} \times \Vcal_{\Gcal}$ represents the set of edges or communication links. An edge or link from node $i$ to node $j$ is denoted by $e\lbrack i, j \rbrack = (v_i, v_j) \in \Ecal_{\Gcal}$, and $w_{\Gcal}: \Ecal_{\Gcal} \rightarrow \mathbb{R} $. Denote by $\Abb_{\Gcal}$ the incidence matrix of $\Gcal$. A graph, $\Gcal$, is symmetric or undirected if $w_{\Gcal}(v_i, v_j) = w_{\Gcal}(v_i, v_j)$ for all $1 \leq i, j \leq |\Vcal_{\Gcal}|$. $\Gcal$ is induced by a matrix, $A_{n \times n}$ if $\Vcal_{\Gcal} = \{1, \ldots, n\}$, and $(i, j) \in \Ecal_{\Gcal}$ if $\lbrack A \rbrack_{ij} \neq 0$, and $w_{\Gcal}(i, j) = \lbrack A \rbrack_{ij}$. 

\textbf{Miscellaneous: }Denote by $\Pcal$ is the family of polynomials. 

\section{Preliminaries}
\label{sec:Preliminaries}
We follow a similar model framework to~\cite{tuhin2016}. Consider the following discrete time LTI dynamics:
\begin{equation}
\label{DT_LTI}
x(k+1) = A\; x(k) + \omega \; \delta(0, k), \hspace{2mm} k \in \{0, 1, 2, \ldots \}
\end{equation}
Here $x(k) = \lbrack x_1(k), \ldots, x_n(k) \rbrack^{T}$ is the vector of state variables. $A$ is the $n \times n$ state transition matrix. $\delta(0, k)$ is the Kronecker delta function, with $\delta(0, 0) = 1$ and $\delta(0, k) = 0 \hspace{2mm} \forall \, k \neq 0$ and $\omega = \lbrack \omega_1, \ldots, \omega_n \rbrack^{T}$ is exogenous to the system. We further assume that $x(0) = 0$. 

We impose the following assumptions on $A$ and $\omega$ (if random) throughout the paper unless explicitly stated otherwise.

\begin{assumption}
	\label{stability}
	$A$ is a primitive (aperiodic and irreducible) row stochastic matrix.
\end{assumption}

\begin{assumption}
	\label{white_noise}
	$\omega$ is an $n \times 1$ random vector with $\Ebb \lbrack \omega \omega^{T} \rbrack = I_{n \times n}$ and $\Ebb \lbrack \omega \rbrack = \textbf{0}$.
\end{assumption}

\begin{definition}
	\label{shock}
A signal, $w(k)$, is a shock if $w(k) = \textbf{0}$ for all $k > 0$ but $w(0) \neq \textbf{0}$.
\end{definition}
Then $w(k) = \omega \delta(0, k)$ in Eq.~\eqref{DT_LTI} is a shock.

\begin{definition}
\label{network}	
A consensus network, $\Ncal(A; \Gcal)$, is a graph $\Gcal = (\Vcal_{\Gcal}, \Ecal_{\Gcal}, w_{\Gcal})$, where $\Vcal_{\Gcal} = \{1, 2, \ldots, n\}$, and for each node, $i \in V_{\Gcal}$, there is an associated dynamical behavior, $i \rightarrow x_i(\cdot)$, given by Eq.~\eqref{DT_LTI}. Further, $w_{\Gcal}(i, j) = \lbrack A \rbrack_{ij}$ for all $1 \leq i, j \leq n$. The graph $\Gcal$ is sometimes referred to as the network topology.
\end{definition}

\begin{remark}
\label{large_network}
The focus of this paper will be on ``large'' consensus networks. Following the discussion in~\cite{tuhin2016}, we will denote a large network $A$ by a sequence of networks $\{A_n\}_{n=1}^{\infty}$ that have a ``fixed'' topology but growing network dimension. 
\end{remark}

\begin{remark}
\label{network_matrix}
Based on Definition~\ref{network} it is obvious that a network matrix $A$ uniquely defines the consensus network $\Ncal(A; \Gcal)$. So for shorthand, we will refer to the network $\Ncal(A; \Gcal)$ by its network matrix $A$. 
\end{remark}

Throughout this paper, we use some commonly encountered network topologies. For these topologies, the network matrices are of the form $A_n = D^{-1}_n \Abb_n$, where $D_n = \text{Diag}(d_1, d_2, \ldots, d_n)$, $\Abb_n$ is the incidence matrix and $d_i$ is the degree of node $i$. Further, $i \rightarrow j$ denotes a directed edge from $i$ to $j$, \textit{i.e.}, $\lbrack \Abb_n \rbrack_{ij} = 1$ but $\lbrack \Abb_n \rbrack_{ji} = 0$. The network topologies considered in this work are shown in Fig.~\ref{common_networks}.
\begin{figure}[h]
\begin{center}
\includegraphics[width=0.7\columnwidth]{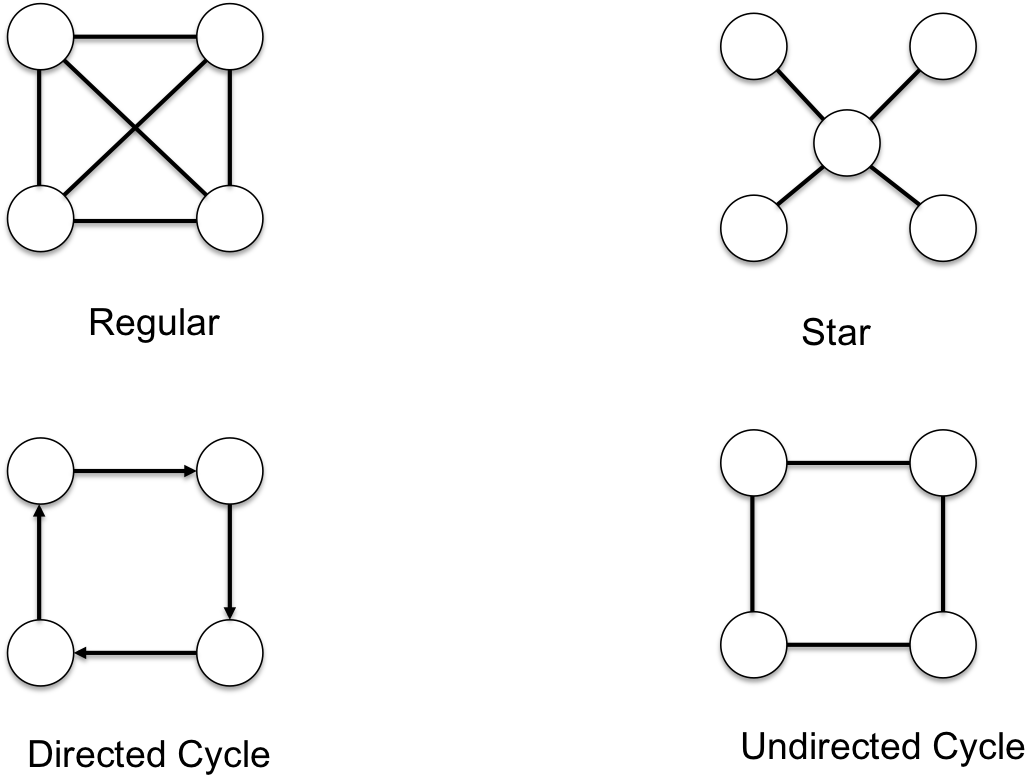}
\end{center}
\caption{Different Network Topologies}
\label{common_networks}
\end{figure} 

Now, under Assumption~\ref{stability}, we have the following result on the steady state dynamics of a consensus network.

\begin{proposition}[Perron-Frobenius Theorem]
\label{perron}
For a stochastic matrix $A$ we have that $\lim_{k\rightarrow\infty} A^k = \textbf{1} \pi^T$, where $\pi^T A = \pi^T$, \textit{i.e.}, $\pi$ is the invariant distribution. Additionally, $\pi_i > 0$ for all $i \in \{1, 2, \ldots, n\}$.
\end{proposition}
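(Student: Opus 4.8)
The plan is to deduce the statement from the classical Perron--Frobenius theory for nonnegative irreducible matrices, invoking the primitivity hypothesis of Assumption~\ref{stability} only at the last step, to promote a spectral-gap statement into convergence of the powers $A^k$. First I would record the two elementary consequences of row stochasticity: $A\textbf{1} = \textbf{1}$, so $1$ is an eigenvalue of $A$ with right eigenvector $\textbf{1}$; and $\|A\|_{\infty} = 1$, so every eigenvalue of $A$ lies in the closed unit disk and hence $\rho(A) = 1$.

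Second, I would invoke Perron--Frobenius for irreducible nonnegative matrices: since $A$ is irreducible, the eigenvalue $\rho(A) = 1$ is simple, and the corresponding left eigenvector can be taken entrywise positive. Scaling it so that its entries sum to one produces a vector $\pi$ with $\pi^T A = \pi^T$, $\pi_i > 0$ for every $i$, and $\pi^T \textbf{1} = 1$; this already establishes the invariant-distribution and positivity assertions.

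Third, for the limit I would use aperiodicity (primitivity) to conclude that every eigenvalue of $A$ distinct from $1$ has modulus strictly less than one. Put $P = \textbf{1}\pi^T$. From $A\textbf{1} = \textbf{1}$, $\pi^T A = \pi^T$ and $\pi^T \textbf{1} = 1$ one verifies $P^2 = P$ and $AP = PA = P$, and then by a short induction $(A - P)^k = A^k - P$ for all $k \geq 1$. The spectrum of $A - P$ coincides with that of $A$ except that the eigenvalue $1$ is replaced by $0$, so $\rho(A - P) < 1$; by Gelfand's formula $(A - P)^k \to 0$, and therefore $A^k \to P = \textbf{1}\pi^T$.

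The only nontrivial ingredient -- and hence the main obstacle, were one to insist on a self-contained argument -- is the spectral input that $\rho(A) = 1$ is simple and is the sole eigenvalue on the unit circle. This is exactly the Perron--Frobenius theorem for primitive matrices, so I would cite a standard matrix-analysis reference for it rather than reproduce the proof; the remaining steps are routine linear algebra.
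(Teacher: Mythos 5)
Your proposal is correct. Note, however, that the paper offers no proof of this proposition at all: it is stated as the classical Perron--Frobenius theorem and used as a known fact (the paper even follows it with a remark and a directed-cycle counterexample showing why primitivity, Assumption~1, is genuinely needed beyond mere existence of an invariant distribution). So there is no in-paper argument to compare against; what you have written is the standard derivation. Your steps check out: $A\textbf{1}=\textbf{1}$ and $\|A\|_{\infty}=1$ give $\rho(A)=1$; irreducibility gives simplicity of the eigenvalue $1$ and a strictly positive left eigenvector, normalizable to $\pi$ with $\pi^T\textbf{1}=1$; and with $P=\textbf{1}\pi^T$ the identities $P^2=P$, $AP=PA=P$ yield $(A-P)^k=A^k-P$ by induction, while primitivity forces $\rho(A-P)<1$ (on the range of $P$ the operator $A-P$ annihilates $\textbf{1}$, and on the complementary $A$-invariant subspace it agrees with $A$, whose remaining eigenvalues lie strictly inside the unit disk), so Gelfand's formula gives $A^k\to\textbf{1}\pi^T$. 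The one external input you lean on --- simplicity of $\rho(A)$ and absence of other unimodular eigenvalues for a primitive nonnegative matrix --- is exactly the classical theorem the paper is citing, so deferring it to a standard reference is entirely consistent with the paper's own treatment.
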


It is important to note that Assumption~\ref{stability} is a stronger condition than ergodicity of $A$, \textit{i.e.}, there exists $\pi$ such that $\pi^T A = \pi$. For example, consider the directed cycle network $DC_n$ in Fig.~\ref{common_networks}.
Then it is easy to see that $\pi_{DC_n} = (1/n) \textbf{1}$. Further, $DC^k_n \neq DC^{k+1}_n$ for all $k > 0$, and in fact $DC_n^{m \times n} = I$ for all $m \in \Nbb$. Thus $\lim_{k \rightarrow \infty} DC^k_n$ does not exist.
\begin{remark}
	\label{assumption_remark}
	Throughout the paper we assume that Proposition~\ref{perron} holds for all stochastic matrices considered unless otherwise stated.
\end{remark}
For the class of consensus networks we consider, to quantify how ``fast'' they approach the invariant distribution we define the notion of convergence time.

\begin{definition}
\label{convergence_time}
We define $\epsilon$--convergence time, $t_{A}(\epsilon)$, for consensus network $A$, as 
\begin{align*}
    t_{A}(\epsilon) = \inf\{k | \, ||A^{k} - 1 \pi^T||_{\infty} < \epsilon\}
\end{align*}
where $\pi$ is the invariant distribution of $A$. We say that consensus has been reached for all $k > t(\epsilon)$.
\end{definition}

Further for the consensus network in Eq.~\eqref{DT_LTI} with an input shock we have that 
\begin{align*}
x(k) &= A^k \omega \\
\lim_{k \rightarrow \infty} x(k) &= \textbf{1} (\pi^T \omega)
\end{align*}
This means that as $t \rightarrow \infty$, $x_i(k) \rightarrow x_j(k)$ for all $i, j \in \{1, 2, \ldots, n\}$. Now, consider $x^Q(\cdot)$
\begin{align*}
x^Q(k) &= Q_n A^k \omega
\end{align*}
where $x^Q(\cdot)$ is the projection of $x(\cdot)$ onto the subspace orthogonal to $\textbf{1}_n$, \textit{i.e.}, $Q$. Then it follows that
\begin{align*}
\lim_{k \rightarrow \infty} ||x^Q(k)||_{\infty} &= 0
\end{align*}
It follows from Proposition 3 in~\cite{tuhin2016} that this convergence is exponential, \textit{i.e.},
\begin{align*}
\lim_{k \rightarrow \infty} ||(x^Q(k))||_{\infty}^{1/k} &= \alpha < 1
\end{align*}
The goal of this work is to show that there exists a stable large network whose robustness is equivalent to the $1/2$--convergence time of the consensus network. For completeness we define robustness for a large network (See~\cite{tuhin2016}),
\begin{definition}
	\label{robustness}
	A large network, $A_n$, is (asymptotically) robust if we have:
	\begin{itemize}
		\item Network matrix, $A_n$, is stable for each $n$
		\item $\text{trace}(P(A_n)) = O(p(n))$
	\end{itemize}
	Here $p(\cdot) \in \Pcal$ and $A_n P(A_n) A^T_n + I = P(A_n)$.  
\end{definition}
\begin{definition}
	\label{robustness_measure}
	The pair $(||\cdot||, P(A_n))$ is a robustness measure for the network matrix $A_n$ if $||\cdot|| \in \mathcal{C}$ where
	\begin{align*}
	\mathcal{C} &= \{||\cdot|| \, | \; (p(n))^{-1} \text{trace}(P(A_n)) \leq ||A||, \\
	&||A|| \leq (p(n)) \text{trace}(P(A_n)) \}
	\end{align*}
	and $p(\cdot) \in \Pcal$.
\end{definition}
Examples of norms that can be considered as part of robustness measures include $\Lcal_{p}$--induced matrix norms, Schatten norms etc. Consequently, we call a large network, $A_n$, fragile if it lacks robustness in the sense of super-polynomial or exponential scaling of $\text{trace}(P(A_n))$.
\begin{remark}
	\label{observability}
	Definition~\ref{robustness} uses the ``controllability'' gramian (assuming $B=I$), $P(A_n)$, \textit{i.e.},
	\begin{align*}
	A_n P(A_n) A^T_n + I = P(A_n)
	\end{align*}
	In~\cite{tuhin2016},~\cite{sarkar2017asymptotic}, robustness measure was instead based on the ``observability'' gramian (assuming $C=I$), \textit{i.e.},
	\begin{align*}
	A^T_n P^{O}(A_n) A_n + I = P^{O}(A_n)
	\end{align*}
	From the perspective of asymptotic robustness, the type of gramian used is not important as $\text{trace}(P(A_n))=\text{trace}(P^{O}(A_n))$. With some more effort similar bounds (as shown in this paper) can be obtained for $P^{O}(A_n)$.
\end{remark}
\section{Convergence Time as Robustness}
\label{convergence_robustness}
In this section we will describe the relationship between the convergence time of a consensus network and the robustness of its stable, projected counterpart. We first characterize this stable network, then we formalize the relationship that we seek between robustness and convergence time, and finally we show how the two are connected. Such an analysis is similar to the one introduced in~\cite{tuhin2016} for consensus networks. However, here we provide a formal justification of why such a projection is appropriate.
\subsection{Projected Network Dynamics}
Consider the consensus network dynamics in Eq.~\eqref{DT_LTI}, we define the projected network dynamics (for the consensus network) as follows 
\begin{definition}
\label{projected_network}
The projected network dynamical behavior, $\text{Proj}_{\Pi}(A)$, for the network matrix $A$ is given by,
\begin{align*}
    x_{\Pi}(k) = \Pi A x_{\Pi}(k-1) + w(k)
\end{align*}
where $\Pi$ is the projection operator and $\Pi A$ is the projected network matrix.
\end{definition}
In this work we are interested specifically in $\text{Proj}_{Q}(A)$ where $Q = I - 11^T/n$. We show next the stability of the network dynamics $\PQ(A)$.
\vspace{2mm}
\begin{proposition}
\label{stable_projection}
The projected network matrix $QA$ is stable.
\end{proposition}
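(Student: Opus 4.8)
The goal is to show $\rho(QA)<1$. My plan is to bypass any explicit eigenvalue computation and instead prove the (equivalent) statement that $(QA)^{k}\to \mathbf{0}$ as $k\to\infty$; recall that a square matrix is Schur stable exactly when its powers converge to the zero matrix (if $Mv=\lambda v$ with $\|v\|=1$, then $\|M^{k}v\|=|\lambda|^{k}\to 0$ forces $|\lambda|<1$). So everything reduces to understanding the powers of $QA$.

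The engine of the argument is the algebraic identity
\[
(QA)^{k} \;=\; Q A^{k}\qquad\text{for every }k\ge 1 .
\]
I would prove this by induction on $k$, the base case $k=1$ being trivial. For the inductive step, write $Q=I-\tfrac{1}{n}\mathbf{1}\mathbf{1}^{T}$ and use the two facts that make this collapse: $A$ is row stochastic, so $A^{m}\mathbf{1}=\mathbf{1}$ for all $m\ge 0$, and $Q\mathbf{1}=\mathbf{0}$. Then, using the induction hypothesis $(QA)^{k-1}=QA^{k-1}$,
\[
(QA)^{k}=QA^{k-1}(QA)=Q\!\left(A^{k}-\tfrac{1}{n}\,A^{k-1}\mathbf{1}\mathbf{1}^{T}A\right)=Q\!\left(A^{k}-\tfrac{1}{n}\,\mathbf{1}\mathbf{1}^{T}A\right)=QA^{k},
\]
where the last step uses $Q\mathbf{1}=\mathbf{0}$. (The case $k=2$ already gives the clean fact $QAQ=QA$, i.e. inserting $Q$ between two $A$'s does nothing.)

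With the identity in hand the conclusion is immediate. By Assumption~\ref{stability}, Proposition~\ref{perron} applies and gives $A^{k}\to\mathbf{1}\pi^{T}$; hence $(QA)^{k}=QA^{k}\to Q\mathbf{1}\pi^{T}=\mathbf{0}$, so $\rho(QA)<1$ and $QA$ is stable. As a bonus, the identity also yields $\|(QA)^{k}\|_{\infty}=\|Q(A^{k}-\mathbf{1}\pi^{T})\|_{\infty}$, so $\rho(QA)$ equals the modulus of the subdominant eigenvalue of $A$, which foreshadows the later link between the projected dynamics and the convergence time.

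I expect the only real obstacle to be recognizing and carefully verifying the identity $(QA)^{k}=QA^{k}$: it is mildly surprising that projecting between every factor of $A$ has no effect beyond a single left projection, but it rests entirely on $A\mathbf{1}=\mathbf{1}$ and $Q\mathbf{1}=\mathbf{0}$. An alternative route — viewing $QA$ as the compression of $A$ to $\mathbf{1}^{\perp}$, or as the rank-one perturbation $A-\tfrac{1}{n}\mathbf{1}\mathbf{1}^{T}$, and then invoking a perturbation/interlacing argument — also works but is more cumbersome, so I would present the identity-based proof.
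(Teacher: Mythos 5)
Your proof is correct and follows essentially the same route as the paper: both hinge on establishing the identity $(QA)^k = QA^k$ by induction from $A\mathbf{1}=\mathbf{1}$ and $Q\mathbf{1}=\mathbf{0}$, and then reading off stability from the convergence of $A^k$. The only (minor) difference is in the last step, where the paper invokes a quantitative geometric mixing bound ($\|A^k - \mathbf{1}\pi^T\|_\infty \le 2C\alpha^k$) to conclude $\lim_k \|QA^k\|_\infty^{1/k} < 1$, whereas you use only the qualitative limit $A^k \to \mathbf{1}\pi^T$ from Proposition~\ref{perron} together with the standard fact that $M^k \to \mathbf{0}$ forces $\rho(M)<1$ — an equally valid and arguably more self-contained finish.
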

\begin{proof}
We first show that $(QA)^k = QA^k$. For $k=1$ this is trivially true. Now, assume that this holds for some $k=k_0$, then we have for $(QA)^{k_0 + 1}$ that
\begin{align*}
(QA)^{k_0 + 1} &= (QA)^{k_0} QA   \\
&= QA^{k_0} QA
\end{align*}
Now, since $A^{k_0}$ is also row stochastic we have that $A^{k_0} Q = A^{k_0} - 11^T/n$ (since $A^{k_0} 1 = 1$). Then, $Q A^{k_0} Q = Q A^{k_0}$ since $Q$ is orthogonal to $11^T/n$ by definition. Thus we have $(QA)^{k_0 + 1} = Q A^{k_0 + 1}$ and our assertion is true. 

From Theorem 4.9 in~\cite{mc_mixing} we have that 
\[
|| A^{k} - 1\pi^T||_{\infty} \leq 2C \alpha^k \implies A^k = 1 \pi^T + W
\]
where $||W||_{\infty} \leq 2C \alpha^k$ and $\alpha \in (0,1)$, then it follows that
\[
\lim_{k \rightarrow \infty}||QA^k||_{\infty}^{1/k} = \alpha < 1
\]
Since $QA^k = (QA)^k$ we have $\rho(QA) < 1$ and $QA$ is stable.
\end{proof}
The proof of Proposition~\ref{stable_projection} shows that $QA^k = (QA)^{k}$. Then under the same input shock, $x_{Q}(k)$ (generated by $\PQ(A)$) is identical to $x^{Q}(k)$, that is the projection of $x(k)$ onto $Q$ generated in Eq.~\ref{DT_LTI}, \textit{i.e.}, the projection of network output is equivalent to output of the network projection. This commutativity between the projection and network operators is what helps us connect the notions of robustness and convergence time. 

Since the projected network dynamical behavior, $\PQ(A)$, is stable, its gramian exists and is unique. This is given by the solution to the matrix equation
\[
(Q A) X (QA)^T + I = X
\]
The solution will be denoted as $X = P(QA)$. In this discussion we showed how for every consensus network that we consider, there exists a stable projected network. In the following sections we will show that there exists a strong relationship between the $\epsilon$--convergence time of a consensus network and the robustness of its projected network. 
\subsection{Convergence Time in Consensus Networks}
We defined the $\epsilon$--convergence time of a consensus network to measure the speed of convergence to the invariant distribution. However, it seems that this speed depends on $\epsilon$. Here we show that the choice of $\epsilon$ is not important as long as it is less than a certain threshold. 
\begin{proposition}
\label{epsilon_eq}
For a consensus network, $A_n$,
\[
t_{A_n}(1/2) = \Theta(p(n))
\] 
then 
\[
t_{A_n}(\epsilon) = \Theta(p(n))
\]
for all $\epsilon < 1$ (uniformly).
\end{proposition}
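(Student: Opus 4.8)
The plan is to show that the $\epsilon$--convergence times for different thresholds $\epsilon < 1$ differ only by a multiplicative constant (plus an additive constant that is absorbed by the $\Theta$ notation), so that polynomial scaling at one threshold forces the same polynomial scaling at every threshold. The key structural fact is submultiplicativity of the "distance to consensus" under powers of $A$. Write $B_k = A_n^k - \mathbf{1}\pi^T$. Using $A_n \mathbf{1} = \mathbf{1}$ and $\pi^T A_n = \pi^T$, one checks that $B_{j+k} = B_j B_k$, hence $||B_{j+k}||_\infty \le ||B_j||_\infty\,||B_k||_\infty$. Also $||B_k||_\infty \le 2$ for all $k$ since each row of $A_n^k$ and of $\mathbf{1}\pi^T$ is a probability vector. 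These two observations are the engine of the whole argument.

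First I would fix any $\epsilon < 1$ and relate $t_{A_n}(\epsilon)$ to $t_{A_n}(1/2)$ from above. Let $m = t_{A_n}(1/2)$, so $||B_m||_\infty < 1/2$. By submultiplicativity, $||B_{rm}||_\infty \le ||B_m||_\infty^{\,r} < 2^{-r}$, so choosing $r = \lceil \log_2(1/\epsilon)\rceil$ makes $||B_{rm}||_\infty < \epsilon$, giving
\[
t_{A_n}(\epsilon) \le \lceil \log_2(1/\epsilon)\rceil \, t_{A_n}(1/2).
\]
Since $\epsilon$ is a fixed constant, $\lceil \log_2(1/\epsilon)\rceil$ is a constant, so $t_{A_n}(\epsilon) = O(t_{A_n}(1/2)) = O(p(n))$.

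Next I would get the matching lower bound, i.e. $t_{A_n}(\epsilon) = \Omega(p(n))$, by running the same inequality in the other direction. For the regime $\epsilon \ge 1/2$ this is immediate: $||B_k||_\infty < \epsilon$ is a weaker requirement than $||B_k||_\infty < 1/2$ only up to a constant number of extra steps — more precisely, if $||B_k||_\infty < \epsilon < 1$ then $||B_{sk}||_\infty < \epsilon^s$, and picking $s = \lceil \log_\epsilon(1/2)\rceil = \lceil \log(1/2)/\log\epsilon \rceil$ forces $||B_{sk}||_\infty < 1/2$, whence $t_{A_n}(1/2) \le s\, t_{A_n}(\epsilon)$, i.e. $t_{A_n}(\epsilon) \ge t_{A_n}(1/2)/s = \Omega(p(n))$. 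Combining the two directions yields $t_{A_n}(\epsilon) = \Theta(p(n))$. Note the constants $r, s$ depend only on $\epsilon$, not on $n$, which is exactly what "uniformly" in the statement refers to; this is where one must be slightly careful to state the $\Theta$ constants explicitly in terms of $\epsilon$.

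The main obstacle — really the only delicate point — is justifying $B_{j+k} = B_j B_k$ and the uniform bound $||B_k||_\infty \le 2$ cleanly, and handling the edge case where $t_{A_n}(1/2)$ or $t_{A_n}(\epsilon)$ could in principle be $0$ or where $\epsilon$ is very close to $1$ (so $s$ is large but still a finite constant). One should also remark that $t_{A_n}(\epsilon)$ is finite for every $n$ by Proposition~\ref{perron} together with the exponential convergence $\lim_k ||B_k||_\infty^{1/k} = \alpha < 1$ established via Proposition~\ref{stable_projection}, so the infima in Definition~\ref{convergence_time} are attained. Everything else is the two-line submultiplicativity estimate above.
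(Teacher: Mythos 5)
Your proposal is correct and rests on the same engine as the paper's proof: the identity $(A^{k}-\mathbf{1}\pi^{T})^{a}=A^{ka}-\mathbf{1}\pi^{T}$ and the resulting submultiplicativity of $\|A^{k}-\mathbf{1}\pi^{T}\|_{\infty}$, with constants $r,s$ depending only on $\epsilon$ (which is exactly the paper's ``uniformly''). The only difference is organizational --- you run both inequalities symmetrically for every $\epsilon<1$, whereas the paper splits into $\epsilon<1/2$ (using monotonicity for the lower bound) and $1/2<\epsilon<1$ (arguing the lower bound by contradiction) --- so this is the same argument, slightly more cleanly packaged.
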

\begin{proof}
We first consider the case of $\epsilon < 1/2$. If $t_A(1/2) = \Theta(p(n))$ this means that there exists $k = \Theta(p(n))$ such that 
\[
||A^k - 1\pi^T||_{\infty} \leq (1/2)
\]
Now it follows from $A^{\tau}1 \pi^T = 1 \pi^T$ and submultiplicativity that
\[
||A^{k+\tau} - 1\pi^T||_{\infty} \leq ||A^k - 1\pi^T||_{\infty}
\]
Let $a = \inf\{t \, | \, t \in \Nbb \text{ and } t \geq \log{(1/2\epsilon)}\}$, then we have that 
\begin{align*}
||A^{ka} - 1\pi^T||_{\infty} &\leq (1/2)^a \\
&\leq \epsilon
\end{align*}
Thus $t_{A}(\epsilon) = O(p(n))$ but $t_A(1/2) = \Theta(p(n))$ and since $\epsilon \leq 1/2 \implies t_{A}(\epsilon) = \Theta(p(n))$ ($t_{A}(\cdot)$ is non-increasing because $||QA^k||_{\infty}$ is non-increasing in $k$).

Next assume that $t_{A_n}(1/2) = \Theta(p(n))$ but $t_{A_n}(\epsilon) = o(p(n))$ for some $1/2 < \epsilon < 1$. Then pick any $a > \frac{\log{(2)}}{\log{(1/\epsilon)}}$, then $t_{A_n}(\epsilon^a) =  o(a p(n)) \implies t_{A_n}(1/2) = o(p(n))$ because $t_A(\cdot)$ is non-increasing. But this is a contradiction. 
\end{proof}
\vspace{2mm}
Proposition~\ref{epsilon_eq} states that as long as $\epsilon$ does not depend on the network dimension the convergence times are equal in an order sense. For example if $\epsilon = 3/4$, then it is the same as $\epsilon = 1/2$ (in an order sense). However, in the definition of $\epsilon$--convergence time, it is possible that $t_{A}(\epsilon) > 0$ for $\epsilon > 1$ and therefore it is important to consider an ``appropriate'' class of $\epsilon$ where the equivalence of Proposition~\ref{epsilon_eq} holds. For example, consider the following stochastic network, $B_n$, and $\alpha = 1- 1/(n-1)$
\begin{equation*}
B_n = \alpha_n I + (1-\alpha_n)\textbf{1}\textbf{1}^T/n    
\end{equation*}
The invariant distribution of $B_n = \textbf{1}/n$, and further since it is a positive matrix, therefore Perron Frobenius theorem holds. Then it follows that,
\begin{proposition}
For the network, $B_n$, we have 
\[
t_{B_n}(2 - 4/n) = 1
\]
but
\[
t_{B_n}(e^{-1}) = \Theta(n)
\]
\end{proposition}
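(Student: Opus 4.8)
The plan is to analyze the matrix $B_n = \alpha_n I + (1-\alpha_n)\mathbf{1}\mathbf{1}^T/n$ with $\alpha_n = 1 - 1/(n-1)$ by computing $B_n^k - \mathbf{1}\pi^T$ exactly. First I would observe that $\mathbf{1}\mathbf{1}^T/n$ is idempotent and commutes with $I$, so $B_n$ has eigenvalue $1$ on $\mathrm{span}(\mathbf{1})$ and eigenvalue $\alpha_n$ on the orthogonal complement $Q$; concretely $B_n = \mathbf{1}\mathbf{1}^T/n + \alpha_n Q$. Since these two pieces are orthogonal projections multiplied by scalars, $B_n^k = \mathbf{1}\mathbf{1}^T/n + \alpha_n^k Q$, and therefore $B_n^k - \mathbf{1}\pi^T = \alpha_n^k Q$ because $\pi = \mathbf{1}/n$. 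Hence $\|B_n^k - \mathbf{1}\pi^T\|_{\infty} = \alpha_n^k \|Q\|_{\infty}$, and $\|Q_n\|_{\infty} = \|I_n - \mathbf{1}_n\mathbf{1}_n^T/n\|_{\infty}$ equals $2 - 2/n$ (the maximum absolute row sum: one diagonal entry $1 - 1/n$ plus $n-1$ off-diagonal entries each of magnitude $1/n$). So $\|B_n^k - \mathbf{1}\pi^T\|_{\infty} = (2 - 2/n)\,\alpha_n^k$.

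With this closed form, both claims reduce to elementary inequalities. For the first, I would note that at $k = 1$ we have $\|B_n - \mathbf{1}\pi^T\|_{\infty} = (2 - 2/n)\alpha_n = (2 - 2/n)(1 - 1/(n-1))$, and a short computation shows $(2 - 2/n)\frac{n-2}{n-1} = \frac{2(n-1)}{n}\cdot\frac{n-2}{n-1} = \frac{2(n-2)}{n} = 2 - 4/n$; since the threshold in $t_{B_n}(\cdot)$ is a strict inequality $< \epsilon$, at $\epsilon = 2 - 4/n$ the value $k=1$ does not yet satisfy it but we must check $k=0$ gives $\|B_n^0 - \mathbf{1}\pi^T\|_\infty = \|Q\|_\infty = 2-2/n \geq 2 - 4/n$ as well, so $t_{B_n}(2-4/n)$ is not $0$; then $k = 2$ gives $(2-2/n)\alpha_n^2 < (2-2/n)\alpha_n = 2 - 4/n$ strictly, hence $t_{B_n}(2 - 4/n) = 1$. (I would double-check the boundary convention in Definition~\ref{convergence_time} to make sure whether $k=1$ or $k=2$ is the infimum; given the statement asserts the answer is $1$, the intended reading makes $k=1$ the first index with $\|B_n^k - \mathbf{1}\pi^T\|_\infty < 2 - 4/n$, which holds iff $(2-2/n)\alpha_n < 2-4/n$, and since that is an equality, the correct convention must be $\le$ at the previous step or one must take $k=1$ as achieving a value $\le$; I will reconcile this carefully in the writeup.)

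For the second claim, set the requirement $(2 - 2/n)\alpha_n^k < e^{-1}$. Taking logarithms, this is $k \log(1/\alpha_n) > \log\bigl((2 - 2/n)e\bigr) = 1 + \log(2 - 2/n)$. Now $\log(1/\alpha_n) = -\log\bigl(1 - \tfrac{1}{n-1}\bigr) = \tfrac{1}{n-1} + O(1/n^2)$, so $1/\log(1/\alpha_n) = \Theta(n)$, while the right-hand side $1 + \log(2 - 2/n)$ is bounded between two positive constants for all $n \geq 2$ (it tends to $1 + \log 2$). Therefore the smallest integer $k$ satisfying the inequality is $k = \bigl\lceil \tfrac{1 + \log(2 - 2/n)}{\log(1/\alpha_n)} \bigr\rceil = \Theta(n)$, which gives $t_{B_n}(e^{-1}) = \Theta(n)$.

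The reasoning is essentially a one-line spectral computation plus bookkeeping, so there is no deep obstacle. The one genuinely delicate point is the boundary convention in the definition of $t_A(\epsilon)$ (strict versus non-strict inequality, and whether $k=0$ is admissible): the first equality $t_{B_n}(2 - 4/n) = 1$ is sensitive to it since $2 - 4/n$ is attained exactly at $k=1$ in the limiting sense. I would state the closed form $\|B_n^k - \mathbf{1}\pi^T\|_\infty = (2 - 2/n)\alpha_n^k$ prominently and then argue the two convergence-time values directly from it, flagging the convention explicitly so the reader can verify the off-by-one does not occur under Definition~\ref{convergence_time}.
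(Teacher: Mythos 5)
Your proposal is correct and follows essentially the same route as the paper: both derive the closed form $\|B_n^k - \mathbf{1}\pi^T\|_{\infty} = (2-2/n)\alpha_n^k$ (the paper by induction on $B_n^k = \alpha_n^k I + (1-\alpha_n^k)\mathbf{1}\mathbf{1}^T/n$, you by the equivalent spectral decomposition $B_n = \mathbf{1}\mathbf{1}^T/n + \alpha_n Q$) and then evaluate at $k=1$ and at $k = \Theta(n)$. Your flag about the strict-inequality convention in Definition~\ref{convergence_time} is well taken --- the value at $k=1$ equals $2-4/n$ exactly, a boundary case the paper's proof silently glosses over --- but this does not change the substance of the argument.
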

\begin{proof}
We first show that $B_n^k = \alpha^k_n I + (1-\alpha^k_n)\textbf{1}\textbf{1}^T/n$. For the base case $k=1$, it is trivial. We assume that this holds for $k = k_0$, then 
\begin{align*}
B^{k_0 + 1}_n &= (\alpha^{k_0} I + (1-\alpha^{k_0})\textbf{1}\textbf{1}^T/n)(\alpha I + (1-\alpha)\textbf{1}\textbf{1}^T/n) \\   
&= (\alpha^{k_0+1} I + (1-\alpha^{k_0+1})\textbf{1}\textbf{1}^T/n)
\end{align*}
Thus our inductive hypothesis is true for $k_0 + 1$, and hence for all $k \in \Nbb$.

Now, observe that 
\begin{align*}
||B^k - 11^T/n||_{\infty} &= \alpha^k||I - 11^T/n||_{\infty}\\
&= 2\alpha^k (1-1/n)
\end{align*}
Then for $k = 1$, we get that $||B - 11^T/n||_{\infty} = 2 (1 - 2/n)$. It is easy to verify that for $k = 2(n-1)$ we see that $||B^k - 11^T/n||_{\infty} < e^{-1}$ but for $k = n-1$ we have that $||B^k - 11^T/n||_{\infty} > e^{-1}$. Thus, $t_{B_n}(e^{-1}) = \Theta(n)$.
\end{proof}
\vspace{2mm}
We have shown that $t_A(\epsilon)$ are essentially equal (in an order sense) for every $\epsilon$ in a certain class; further notice that the class of $\{\epsilon < 1 \text{ (uniformly)}\}$ is not very restrictive given that $||Q||_{\infty} = 2, ||A^k||_{\infty} = 1$ for all $k$. Following this whenever we say convergence time of $A$, we will mean $t_A(1/2)$. 

Next, we relate convergence time to the dynamics of the projected network. This will be the first step towards making connections between the robustness of the projected network and convergence time of the original consensus network.
\vspace{2mm}
\begin{theorem}
\label{projected_convergence}
For a consensus network, $A$, we have that 
\[
t_{A}(1/2) = \Theta(p(n))
\]
if and only if for some $k = \Theta(p(n))$ we have 
\[
||QA^{k}||_{\infty} \leq 1/2
\]
\end{theorem}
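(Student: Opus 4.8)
The plan is to establish the equivalence by relating the matrix $QA^k$ to the deviation $A^k - \mathbf{1}\pi^T$, using the fact (proven in Proposition~\ref{stable_projection}) that $QA^k = (QA)^k$ and that $A^k = \mathbf{1}\pi^T + W_k$ with $QW_k = Q A^k$. First I would observe that multiplying $A^k = \mathbf{1}\pi^T + W_k$ on the left by $Q$ annihilates the rank-one term, since $Q\mathbf{1} = \mathbf{0}$, giving $QA^k = QW_k$. Conversely, I would recover control of $W_k = A^k - \mathbf{1}\pi^T$ from $QA^k$: because $W_k$ has the property that each of its rows sums to zero (as $A^k\mathbf{1} = \mathbf{1}$ and $\pi^T\mathbf{1} = 1$), one has $W_k \mathbf{1} = \mathbf{0}$, which does not immediately help on the left; instead I would use that $\mathbf{1}^T W_k$ need not vanish, so the correct identity to exploit is $A^k - \mathbf{1}\pi^T = QA^k + (\mathbf{1}\mathbf{1}^T/n)A^k - \mathbf{1}\pi^T = QA^k + \mathbf{1}(\mathbf{1}^T A^k/n - \pi^T)$.

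The key technical step is then to bound $\|\mathbf{1}(\mathbf{1}^T A^k/n - \pi^T)\|_\infty$ in terms of $\|QA^k\|_\infty$. I would argue that the row vector $\mathbf{1}^T A^k/n - \pi^T$ is itself controlled by $Q A^k$: note $\mathbf{1}^T A^k / n - \pi^T = \pi^T(A^k - \mathbf{1}\pi^T)$ is false in general, so instead I would use stationarity of $\pi$, namely $\pi^T A^k = \pi^T$, to write $\mathbf{1}^T A^k/n - \pi^T = (\mathbf{1}^T/n - \pi^T)A^k = (\mathbf{1}^T/n - \pi^T)(A^k - \mathbf{1}\pi^T)$, where the last equality holds because $(\mathbf{1}^T/n - \pi^T)\mathbf{1}\pi^T = (1 - 1)\pi^T = \mathbf{0}$. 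Now $\mathbf{1}^T/n - \pi^T$ is a fixed row vector with $\ell_1$-norm at most $2$ (it is a difference of probability vectors), and $A^k - \mathbf{1}\pi^T = QA^k + \mathbf{1}(\mathbf{1}^T A^k/n - \pi^T)$ again — which risks circularity — so I would instead bound $\|A^k - \mathbf{1}\pi^T\|_\infty$ directly: since row $i$ of $A^k - \mathbf{1}\pi^T$ differs from row $i$ of $QA^k$ only by the same additive row vector $\mathbf{1}^T A^k/n - \pi^T$, and since all rows of $A^k - \mathbf{1}\pi^T$ are equal to each other only up to... Here the cleaner route: the $\ell_\infty \to \ell_\infty$ norm is the maximum absolute row sum, so I would compare $\|A^k-\mathbf{1}\pi^T\|_\infty$ and $\|QA^k\|_\infty$ row by row, using that $Q$ acting on the left replaces row $i$ by (row $i$) minus (the average of all rows), and show each differs from the other by at most a constant factor (at most $2$, say) because averaging rows is a contraction in the relevant sense.

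Concretely, the two inequalities I need are: (i) $\|QA^k\|_\infty \le \|A^k - \mathbf{1}\pi^T\|_\infty + \|(\mathbf{1}\mathbf{1}^T/n)(A^k - \mathbf{1}\pi^T)\|_\infty \le 2\|A^k - \mathbf{1}\pi^T\|_\infty$, using $QA^k = Q(A^k - \mathbf{1}\pi^T)$ and $\|Q\|_\infty = 2$ (or the sharper fact that $Q$ subtracts the row average); and (ii) $\|A^k - \mathbf{1}\pi^T\|_\infty \le C\|QA^k\|_\infty$ for a universal constant $C$, via the identity $A^k - \mathbf{1}\pi^T = QA^k + \mathbf{1}(\mathbf{1}^T/n - \pi^T)(A^k - \mathbf{1}\pi^T)$ together with $\|(\mathbf{1}^T/n - \pi^T)(A^k-\mathbf{1}\pi^T)\|_\infty \le \|\mathbf{1}^T/n - \pi^T\|_1 \|A^k - \mathbf{1}\pi^T\|_\infty$; the subtlety is that this still has $\|A^k-\mathbf{1}\pi^T\|_\infty$ on both sides, so I would close the loop by first establishing (ii) in the weaker multiplicative form and then invoking Proposition~\ref{epsilon_eq} — whose statement precisely says that the $\Theta(p(n))$ threshold is insensitive to the constant $\epsilon$. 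That is, once I know $\|QA^k\|_\infty \le 1/2 \iff \|A^k - \mathbf{1}\pi^T\|_\infty \le c$ for \emph{some} fixed constant $c < 1$ (up to an $O(1)$ shift in $k$), Proposition~\ref{epsilon_eq} upgrades this to the stated equivalence with threshold exactly $1/2$. I expect the main obstacle to be handling this apparent circularity cleanly — i.e., extracting an honest one-sided bound $\|A^k - \mathbf{1}\pi^T\|_\infty \le C\|QA^k\|_\infty$ rather than a self-referential inequality — and the cleanest fix is to work at a shifted time index $k + \tau$ with $\tau = O(1)$, using submultiplicativity $\|A^{k+\tau} - \mathbf{1}\pi^T\|_\infty \le \|A^\tau - \mathbf{1}\pi^T\|_\infty \cdot \|\ldots\|$ (as already used in Proposition~\ref{epsilon_eq}) to absorb the constant, and then appeal to Proposition~\ref{epsilon_eq} to discharge the $O(1)$ change in $k$ and in $\epsilon$.
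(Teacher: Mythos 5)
Your direction (i) is exactly the paper's forward argument ($QA^k = Q(A^k-\mathbf{1}\pi^T)$, $\|Q\|_\infty = 2$, then Proposition~\ref{epsilon_eq} with $\epsilon = 1/4$), and your overall plan — constant-factor comparability of $\|QA^k\|_\infty$ and $\|A^k-\mathbf{1}\pi^T\|_\infty$ — is the right one. But direction (ii) has a genuine gap, which you yourself flag and then do not close. The inequality $\|A^k-\mathbf{1}\pi^T\|_\infty \le \|QA^k\|_\infty + 2\|A^k-\mathbf{1}\pi^T\|_\infty$ is vacuous, and neither of your proposed escapes works. Submultiplicativity at a shifted index buys nothing: $\|A^{k+\tau}-\mathbf{1}\pi^T\|_\infty \le \|A^\tau - \mathbf{1}\pi^T\|_\infty\,\|A^k-\mathbf{1}\pi^T\|_\infty$ still has the uncontrolled quantity on the right, and for $\tau = O(1)$ the prefactor $\|A^\tau-\mathbf{1}\pi^T\|_\infty$ can be arbitrarily close to $2$, so there is no contraction to exploit. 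And Proposition~\ref{epsilon_eq} cannot discharge the problem either: it only relates different $\epsilon$-thresholds of the \emph{same} quantity $\|A^k - \mathbf{1}\pi^T\|_\infty$; it says nothing about converting a bound on $\|QA^k\|_\infty$ into one on $\|A^k-\mathbf{1}\pi^T\|_\infty$, which is precisely what the backward implication requires.

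The missing step is a non-circular bound on the row-average discrepancy, and it is available from the very identity you wrote down — you just multiplied by the wrong factor. Since $\pi^T Q = \pi^T - \mathbf{1}^T/n$, stationarity gives $\pi^T QA^k = \pi^T A^k - \mathbf{1}^T A^k/n = \pi^T - \mathbf{1}^T A^k/n$, hence $\|\mathbf{1}^T A^k/n - \pi^T\|_1 \le \|\pi\|_1\,\|QA^k\|_\infty = \|QA^k\|_\infty$ (equivalently: $\pi^T$ is a $\pi$-weighted convex combination of the rows of $A^k$, so it lies within the spread of those rows around their average). This yields the honest bound $\|A^k-\mathbf{1}\pi^T\|_\infty \le \|QA^k\|_\infty + \|\mathbf{1}(\mathbf{1}^TA^k/n - \pi^T)\|_\infty \le 2\|QA^k\|_\infty$, with no self-reference. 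This is exactly what the paper's proof does, in limit form: it writes $A^k = \mathbf{1}\delta_k^T + W_k$ with $W_k = QA^k$, lets $\alpha\to\infty$ in $A^{k+\alpha} = \mathbf{1}\delta_k^T + A^\alpha W_k$ to get $\mathbf{1}\pi^T = \mathbf{1}\delta_k^T + \mathbf{1}\pi^T W_k$, and concludes $\|\mathbf{1}\delta_k^T - \mathbf{1}\pi^T\|_\infty \le \|W_k\|_\infty$. Because the final constant is $2$ rather than $1$, you must first pass from $\|QA^\tau\|_\infty \le 1/2$ to $\|QA^{2\tau}\|_\infty \le 1/4$ via $(QA)^{2\tau} = (QA^\tau)^2$ (Proposition~\ref{stable_projection}), which keeps $k = 2\tau = \Theta(p(n))$; with that adjustment your outline closes.
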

\begin{proof}
If $t_{A}(1/2) = \Theta(p(n))$, then from Proposition~\ref{epsilon_eq} it follows that $t_{A}(1/4) = \Theta(p(n))$, \textit{i.e.}, there exists $k = \Theta(p(n))$ such that 
\[
||A^k - 1 \pi^T||_{\infty} \leq 1/4
\]
Then $QA^k = Q(A^k - 1 \pi^T)$ (since $Q1\pi^T = 0$) and we have
\begin{align*}
||Q(A^k - 1 \pi^T)||_{\infty} &\leq ||Q||_{\infty}||(A^k - 1 \pi^T)||_{\infty} \\
&\leq 1/2
\end{align*}
The last inequality follows from the fact $||Q||_{\infty} =  2$.

Next assume that $QA^{\tau} \leq 1/2$ for some $\tau = \Theta(p(n))$, and since $QA^{2\tau} = (QA^\tau)^2$ from Proposition~\ref{stable_projection}, it follows that $QA^{k} \leq 1/4$ for $k = \Theta(p(n))$. Then we have that for some $W_k$, in the space orthogonal to $\textbf{1}\textbf{1}^T$ and $||W_k||_{\infty} \leq 1/4$, that
\begin{align*}
A^k &= 1 \delta_k^T +  W_k \nonumber \\
\implies A^{k+\alpha} &= A^{\alpha}1\delta_k^T + A^{\alpha} W_k  
\end{align*}
Now, we will show that $\delta_{k}$ and $\pi$ do not vary too much in norm. 
\begin{align*}
\lim_{\alpha \rightarrow \infty}A^{k + \alpha} &= 1 \delta_k^T + 1 \pi^{T} W_k  \\
1 \pi^T &= 1 \delta_k^T + 1 \pi^{T} W_k \\
||1 \pi^T - 1 \delta_k^T||_{\infty} &\leq (1/4) ||1 \pi^{T}||_{\infty}   \\ 
&= (1/4)
\end{align*}
Then we have $||A^k - 1 \pi^T||_{\infty} \leq ||1\delta_k - 1 \pi^T||_{\infty} + ||W_k||_{\infty} \leq (1/2)$. This means that $t_A(1/2) = O(p(n))$.
\end{proof}
\vspace{2mm}
We will show that the convergence time of the consensus network depends on robustness of the projected network and some variational properties of the consensus network, \textit{i.e.},
\begin{align*}
\text{VAR}_1(A) ||P(QA)||_{(1)}\leq t_{A}(1/2) \leq \text{VAR}_2(A)||P(QA)||_{(2)}
\end{align*}
Here $(||\cdot||_{(i)}, P(QA))$ is some appropriately chosen robustness measure and $\text{VAR}_i(A)$ is a measure of the ``variance'' in $A$. It will turn out that in some non-trivial cases robustness meets convergence time with equality (in an order sense). We describe these quantities more precisely in the following discussion.
\subsection{Convergence Time as a Robustness Measure}
In the preceding discussion, we related the convergence time to the norm properties of the projected network dynamics (specifically $||QA^k||_{\infty}$). It can be shown that similar bounds can also be obtained for the projection $Q_{\pi} = I - \textbf{1}\pi^T$. In fact, there is some hope that these are stronger results compared to those given by $Q$. However for the sake of completeness we limit ourselves to results for $Q$.
\begin{assumption}
\begin{align*}
\sup_{ij}|a_{ij} - (1/n)\sum_{i=1}^n a_{ij}| > 0
\end{align*}
\textit{i.e.}, all rows are not identical and consensus has not been reached. 
\end{assumption}
\vspace{2mm}
This assumption is not restrictive as a consensus network with identical rows would imply that $t_{A}(\epsilon) = 1$ for all $\epsilon$ which is not a very interesting case. For the purpose of our main theorem we define a variance parameter.
\begin{definition}
	\label{variance_params}
\begin{align*}
\text{Var}_0(QA) &= \sup_{i, j}|a_{i, j} - (1/n)\sum_{i=1}^n a_{ij}| 
\end{align*}
\end{definition}
\vspace{2mm}
\begin{theorem}
\label{equiv}
Let $t_{A}(1/2) = t^{*} = \Theta(p(n))$, then we have that 
\begin{align*}
(1/n)\text{trace}(P(QA)) &\leq 1 +\text{Var}_0(A) \Theta(p(n)) \\
1 + (1/n)\Theta(p(n)) \leq &||P(QA)||_2 \leq 1 + \Big(\frac{\pi_{\max}}{\pi_{\min}}\Big)\Theta(p(n))
\end{align*}
where $(\pi_{\max}, \pi_{\min}) = (\max_{i} \pi_i, \min_{i} \pi_i)$.
\end{theorem}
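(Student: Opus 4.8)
The plan is to reduce both quantities to norm estimates on the powers $QA^k$. By Proposition~\ref{stable_projection} one has $(QA)^k = QA^k$ for $k\ge 1$, while $(QA)^0 = I$, so the gramian admits the series form
\begin{align*}
P(QA) = I + \sum_{k\ge 1}(QA^k)(QA^k)^T .
\end{align*}
Hence $P(QA)\succeq I$, $\text{trace}(P(QA)) = n + \sum_{k\ge 1}\|QA^k\|_F^2$, and (since $P(QA)$ is positive definite $n\times n$) $\|P(QA)\|_2 \ge \tfrac1n\text{trace}(P(QA)) = 1 + \tfrac1n\sum_{k\ge1}\|QA^k\|_F^2$. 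Everything then rests on two elementary estimates. \emph{(i) Geometric decay.} Since $QA^{2k} = (QA^k)^2$ (Proposition~\ref{stable_projection}) and $\|A^j\|_\infty = 1$, Theorem~\ref{projected_convergence} furnishes a $\tau = \Theta(p(n))$ with $\|QA^\tau\|_\infty \le \tfrac12$, whence $\|QA^k\|_\infty \le 2^{-\lfloor k/\tau\rfloor}$ and $\sum_{k\ge1}\|QA^k\|_\infty = O(p(n))$; the identical argument applies to $A^k - \mathbf{1}\pi^T$. \emph{(ii) Entrywise control.} Writing $QA^{k+1} = (QA^k)A$ and using $QA^k\mathbf{1} = Q\mathbf{1} = 0$, each entry of $QA^{k+1}$ equals $\sum_\ell (QA^k)_{i\ell}(a_{\ell j} - m_j)$ with $m_j$ the midpoint of the range of column $j$ of $A$; bounding $|a_{\ell j} - m_j|$ by half that range, which is $\le \text{Var}_0(A)$, gives $\|QA^k\|_{\max}\le \|QA^k\|_\infty\,\text{Var}_0(A)\le 2\,\text{Var}_0(A)$ for all $k\ge1$.

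For the first inequality, $\|QA^k\|_F^2 \le \|QA^k\|_{\max}\sum_{i,j}|(QA^k)_{ij}| \le \|QA^k\|_{\max}\cdot n\|QA^k\|_\infty \le 2n\,\text{Var}_0(A)\,\|QA^k\|_\infty$; summing over $k\ge1$ and invoking (i),
\begin{align*}
\tfrac1n\text{trace}(P(QA)) \le 1 + 2\,\text{Var}_0(A)\sum_{k\ge1}\|QA^k\|_\infty = 1 + \text{Var}_0(A)\,\Theta(p(n)).
\end{align*}
For the upper bound on $\|P(QA)\|_2$, the triangle inequality over the series gives $\|P(QA)\|_2 \le 1 + \sum_{k\ge1}\|QA^k\|_2^2$, and $\|QA^k\|_2 = \|Q(A^k-\mathbf{1}\pi^T)\|_2 \le \|A^k-\mathbf{1}\pi^T\|_2 \le \sqrt{\|A^k-\mathbf{1}\pi^T\|_1\,\|A^k-\mathbf{1}\pi^T\|_\infty}$. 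The $\infty$-factor is handled by (i). For the $1$-factor I would pass to the time-reversed chain $\hat A = D_{\pi}^{-1}A^T D_{\pi}$, again primitive row-stochastic with invariant distribution $\pi$, for which $\hat A^k_{ij} = (\pi_j/\pi_i)A^k_{ji}$ by induction; this gives $\sum_i |A^k_{ic} - \pi_c| = \pi_c\sum_i \pi_i^{-1}|\hat A^k_{ci} - \pi_i| \le \tfrac{\pi_{\max}}{\pi_{\min}}\|\hat A^k - \mathbf{1}\pi^T\|_\infty$, so $\|A^k-\mathbf{1}\pi^T\|_1 \le \tfrac{\pi_{\max}}{\pi_{\min}}\|\hat A^k - \mathbf{1}\pi^T\|_\infty \le 2\tfrac{\pi_{\max}}{\pi_{\min}}$. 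Therefore $\|P(QA)\|_2 \le 1 + 2\tfrac{\pi_{\max}}{\pi_{\min}}\sum_{k\ge1}\|A^k-\mathbf{1}\pi^T\|_\infty = 1 + \tfrac{\pi_{\max}}{\pi_{\min}}\Theta(p(n))$.

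The lower bound $1 + \tfrac1n\Theta(p(n)) \le \|P(QA)\|_2$ is the crux. Starting from $\|P(QA)\|_2 \ge 1 + \tfrac1n\sum_{k\ge1}\|QA^k\|_F^2$, the point is that the Frobenius norms cannot collapse while consensus is still far off: iterating $QA^{2k} = (QA^k)^2$ gives $\|QA^{mk}\|_\infty \le \sqrt n\,\|QA^{mk}\|_2 \le \sqrt n\,\|QA^k\|_F^{\,m}$, so if $\|QA^k\|_F^2$ were too small the left-hand side would drop below $\tfrac14$ before time $\Theta(p(n))$, contradicting Theorem~\ref{projected_convergence}. Quantitatively this forces $\|QA^k\|_F^2 \gtrsim (c\sqrt n)^{-ck/p(n)}$ for $k$ up to a fixed fraction of $p(n)$, and integrating over $k$ gives $\sum_{k\ge1}\|QA^k\|_F^2 = \Omega(p(n)/\log n)$, hence $\|P(QA)\|_2 \ge 1 + \Omega(p(n)/(n\log n))$ --- the claimed bound up to the logarithmic factor. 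I expect removing that last factor, and so obtaining the clean $1/n$, to be the main obstacle: it should require either a sharper interplay between $\|QA^k\|_F$, $\|QA^k\|_2$ and $\|QA^k\|_\infty$, or a more careful exploitation of the non-degeneracy assumption $\text{Var}_0(A)>0$ together with the primitivity of $A$. The other three inequalities are routine once (i) and (ii) are in hand.
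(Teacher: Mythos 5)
Your trace bound and your upper bound on $\|P(QA)\|_2$ are essentially sound and land where the paper does. For the trace, you factor $\|QA^k\|_{\max}\le 2\,\mathrm{Var}_0(A)$ out of $\|QA^k\|_F^2$ and sum $\|QA^k\|_\infty$ geometrically in blocks of length $\tau$; the paper does exactly this (it pulls out $2\|QA\|_{\max}$ and sums the residual $\ell_1$ row-sums over blocks of length $t^*$). For the $\|P\|_2$ upper bound you take a genuinely different but valid route: the interpolation $\|X\|_2\le\sqrt{\|X\|_1\|X\|_\infty}$ applied to $X=A^k-\mathbf{1}\pi^T$, with the column-sum factor controlled through the time-reversed chain. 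The paper instead bounds each symmetric term by $\|QA^k(A^k)^TQ\|_\infty\le\|QA^k\|_\infty\|(A^k)^TQ\|_\infty$ and uses $\|(A^k)^T\|_\infty\le\pi_{\max}/\pi_{\min}$ directly from $\pi^TA^k=\pi^T$; this is shorter and avoids the reversed chain, but both give $1+(\pi_{\max}/\pi_{\min})\Theta(p(n))$.

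The genuine gap is the lower bound on $\|P(QA)\|_2$, and it is not just the $\log n$ factor you concede. Your route through $\sigma_1(P)\ge \operatorname{trace}(P)/n$ cannot reach $1+(1/n)\Theta(p(n))$: the only dimension-free consequence of $\|QA^k\|_\infty>1/2$ is that \emph{one} row of $QA^k$ has $\ell_1$-norm $>1/2$, hence squared $\ell_2$-norm $\ge 1/(4n)$ by Cauchy--Schwarz, so generically $\|QA^k\|_F^2=\Omega(1/n)$ and $\operatorname{trace}(P)/n = 1+\Omega(p(n)/n^2)$ --- off by a full factor of $n$. Your spectral contradiction argument ($\|QA^{mk}\|_\infty\le\sqrt{n}\,\|QA^k\|_F^m$) tries to force $\|QA^k\|_F=\Omega(1)$, but that is simply false in examples where all but one row have already equilibrated, so the argument cannot be repaired along these lines. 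The paper's device is to lower-bound $\sigma_1(P)$ by the largest \emph{diagonal entry} rather than the normalized trace: $[P(QA)]_{mm}=1+\sum_{k\ge1}\sum_j|(QA^k)_{mj}|^2$, and for every $k$ up to $t^*$ the row whose $\ell_1$-norm exceeds $1/2$ contributes at least $1/(4n)$, giving $1+\Theta(p(n))/n$ with no logarithm. That single substitution --- a distinguished diagonal entry accumulating $\Omega(1/n)$ per time step, instead of averaging the Frobenius mass over all $n$ rows --- is the missing idea; with it, the rest of your estimates (i) and (ii) carry the other three inequalities through.
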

\vspace{1.5mm}
The proof of this can be found in Section~\ref{equiv_proof}.

$\text{Var}_0(A)$ parameter measures the furthest any element in the stochastic matrix is away from its column average. If all elements are very close to the columnar average, then this implies consensus from preceding results. In such a case the average is the stationary value. Another point to note is that the bounds are not tight. However, they show that the gramian measures the convergence time reasonably well. 

In previous literature, there exist bounds on convergence times of the form 
\begin{align*}
t_{A}(1/2) \leq C (1 - |\lambda_2|)^{-1}
\end{align*}
Where $C$ can be a function of network size. General lower bounds are obtained in terms of bottleneck ratio, which is a discrete optimization problem (See Chapter 7~\cite{mc_mixing}). Further to obtain tighter and/or lower bounds on convergence times, special structure was required. In this work we find explicitly characterize the effect of network dimension in terms of robustness and ``variance''. Moreover, our bounds do not require any special structural constraints. Our result can be summarized as follows
\begin{align*}
\frac{\text{trace}(P(QA))}{n \text{Var}_0(A)} &= O(t_{A}(1/2)) \\   
n||P(QA)||_2&= \Omega(t_{A}(1/2))
\end{align*}

\section{Examples}
\label{sec:examples}
We will present a few examples remarking on the tightness of our bounds. First we define a class of derived networks that will be useful in our discussion.
\begin{definition}
For any network $A$, its lazy version is given by 
\[
A_n^L = (1/2)(A_n + I)
\]
\end{definition}
The lazy version is useful because first, $A^L_n$ is aperiodic and irreducible even if $A_n$ is not and second, it preserves the topological and spectral properties of $A_n$. 
\begin{figure}[h]
\begin{center}
\includegraphics[width=0.7\columnwidth]{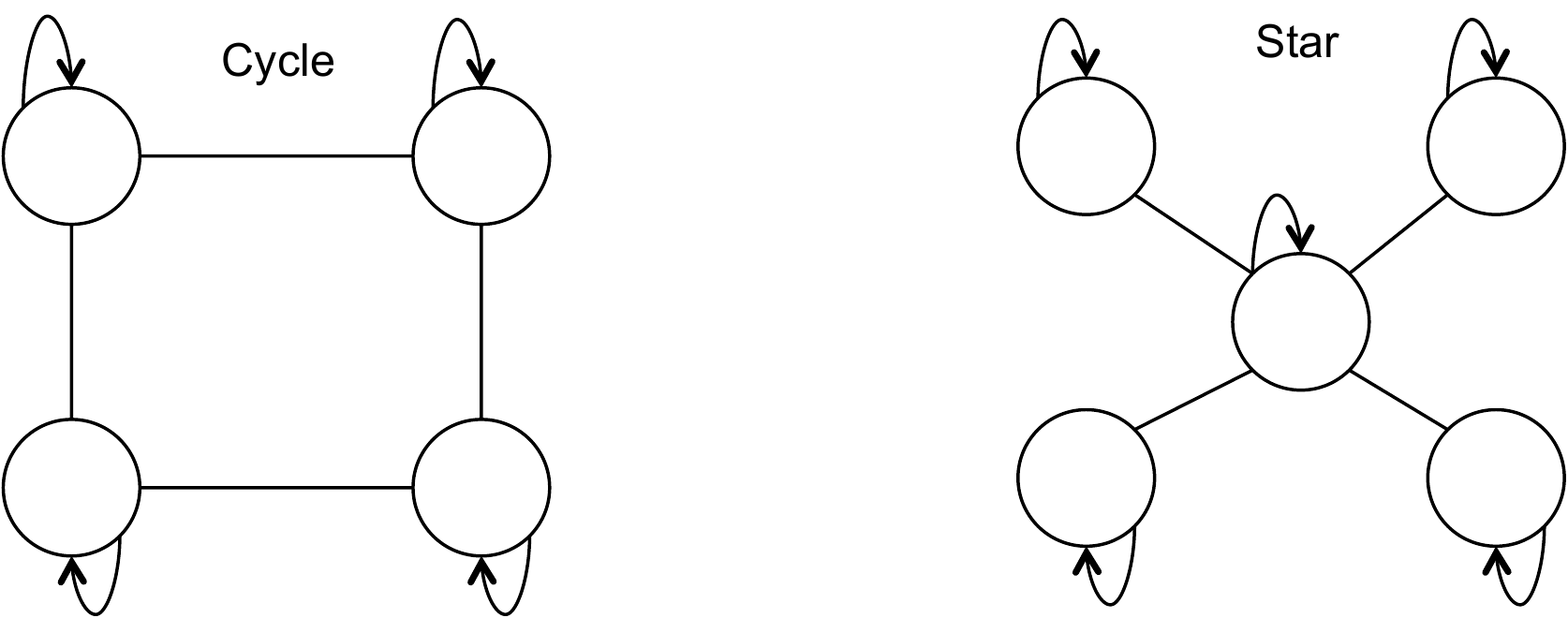}
\end{center}
\caption{Lazy versions of Network Topologies}
\label{lazy}
\end{figure} 
\subsection{Tightness of bounds}
Consider the star network, $S_n$, and the cycle network, $C_n$, in Fig.~\ref{common_networks}. We consider the lazy versions of these networks denoted by $S_n^L, C_n^L$ respectively. Then,
\vspace{2mm}
\begin{proposition}
	\label{lazy_rw_robust}
For the lazy random walk networks, we have $\text{trace}(C_n^L) = \Theta(n^3), ||P(C_n^L)||_2 = \Theta(n^2)$, and $\text{trace}(S_n^L) = \Theta(n), ||P(S_n^L)||_2 = \Theta(1)$. 
\end{proposition}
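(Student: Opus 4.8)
\emph{Proof plan.} The plan is to read off both quantities directly from the gramian of the projected network. Recall from the proof of Proposition~\ref{stable_projection} that $QA^{k}=(QA)^{k}$; unrolling $(QA)P(QA)(QA)^{T}+I=P(QA)$ gives $P(QA)=\sum_{k\ge 0}(QA^{k})(QA^{k})^{T}$, so that $\text{trace}(P(QA))=\sum_{k\ge 0}\|QA^{k}\|_{F}^{2}$ and $\|P(QA)\|_{2}$ equals the top eigenvalue of that same sum. Alongside, I would keep the two-sided bounds of Theorem~\ref{equiv} (and the convergence-time characterisation of Theorem~\ref{projected_convergence}) at hand for order checks. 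The preliminary data is: $C_{n}^{L}=\tfrac12(C_{n}+I)$ has uniform invariant distribution, $\text{Var}_{0}(C_{n}^{L})=\Theta(1)$, and $t_{C_{n}^{L}}(1/2)=\Theta(n^{2})$; while $S_{n}^{L}=\tfrac12(S_{n}+I)$ has $\pi_{\max}/\pi_{\min}=n-1$, $\text{Var}_{0}(S_{n}^{L})=\Theta(1)$, and $t_{S_{n}^{L}}(1/2)=\Theta(1)$.

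For the cycle I would use that $QC_{n}^{L}=C_{n}^{L}-\tfrac1n\mathbf{1}\mathbf{1}^{T}$ is a symmetric circulant matrix (the lazy walk on $C_{n}$ is doubly stochastic), hence diagonalised by the discrete Fourier basis with eigenvalues $0$ along $\mathbf{1}$ and $\lambda_{j}=\cos^{2}(\pi j/n)$ for $1\le j\le n-1$. Then $P(QC_{n}^{L})$ is circulant with spectrum $\{1\}\cup\{(1-\cos^{4}(\pi j/n))^{-1}:1\le j\le n-1\}$, so
\[
\|P(QC_{n}^{L})\|_{2}=\frac{1}{1-\cos^{4}(\pi/n)},\qquad
\text{trace}(P(QC_{n}^{L}))=1+\sum_{j=1}^{n-1}\frac{1}{1-\cos^{4}(\pi j/n)}.
\]
Writing $1-\cos^{4}\theta=\sin^{2}\theta\,(1+\cos^{2}\theta)$ with $1+\cos^{2}\theta\in[1,2]$, the $2$-norm follows from $\sin^{2}(\pi/n)\sim\pi^{2}/n^{2}$, giving $\|P(QC_{n}^{L})\|_{2}=\Theta(n^{2})$; for the trace I would split off the near-unit modes ($j$ small, and $j$ near $n$, where $1-\lambda_{j}^{2}$ has order $(\min\{j,n-j\}/n)^{2}$) from the bulk, estimate the resulting sum of $\csc^{2}(\pi j/n)$ terms, and reconcile its order with the bound $\text{trace}(P(QA))\le n\bigl(1+\text{Var}_{0}(A)\,\Theta(t_{A}(1/2))\bigr)$ of Theorem~\ref{equiv}.

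For the star, $S_{n}$ is the simple random walk on $K_{1,n-1}$, with spectrum $\{1,-1,0^{(n-2)}\}$, so $S_{n}^{L}$ has spectrum $\{1,0,(1/2)^{(n-2)}\}$. The key point is that, although $S_{n}^{L}$ is not symmetric, $QS_{n}^{L}$ is: its $\tfrac12$-eigenspace $\{v:v_{1}=0,\ v_{2}+\cdots+v_{n}=0\}$ is orthogonal to its $0$-eigenspace $\{v:v_{2}=\cdots=v_{n}\}$, and the two span $\Rbb^{n}$. Hence $QS_{n}^{L}=\tfrac12\Pi$ with $\Pi$ the orthogonal projection onto the $\tfrac12$-eigenspace, $(QS_{n}^{L})^{k}=2^{-k}\Pi$ for $k\ge1$, and $P(QS_{n}^{L})=I+\tfrac13\Pi$. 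Therefore $\text{trace}(P(QS_{n}^{L}))=n+\tfrac13(n-2)=\Theta(n)$ and $\|P(QS_{n}^{L})\|_{2}=\tfrac43=\Theta(1)$.

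The main obstacle is the cycle: once the circulant structure is exposed the $2$-norm and the exact trace formula are immediate, but pinning the trace down to the claimed polynomial order --- showing that the near-unit Fourier modes already account for its full growth and matching this against $n\,\text{Var}_{0}(C_{n}^{L})\,t_{C_{n}^{L}}(1/2)$ --- is the one computation requiring care. The star case is essentially a closed-form computation once one observes that $QS_{n}^{L}$ is symmetric, so non-normality never enters; it also illustrates that the factor $\pi_{\max}/\pi_{\min}$ in the upper bound of Theorem~\ref{equiv} can be far from tight.
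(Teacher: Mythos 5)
Your star computation and your cycle $2$--norm computation are correct, and they follow essentially the same route as the paper: the paper also treats $C_n^L$ as a symmetric circulant and reads off $\|P(QC_n^L)\|_2 = \Theta\bigl((1-\cos^2(\pi/n))^{-1}\bigr) = \Theta(n^2)$, and for the star it uses $\|QS_n^L\|_2 = 1/2$ together with $\|P\|_2 \le 1 + \|QS_n^L\|_2^2\,\|P\|_2$ to get $\|P(QS_n^L)\|_2 \le 4/3$. Your observation that $QS_n^L$ is exactly $\tfrac12$ times an orthogonal projection of rank $n-2$, hence $P(QS_n^L) = I + \tfrac13\Pi$ in closed form, is a sharper version of the same argument and also delivers the trace claim $\Theta(n)$ for the star, which the paper's proof does not address explicitly.

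The gap is the cycle trace, which you defer as ``the one computation requiring care.'' If you finish the computation you set up, you get $\mathrm{trace}(P(QC_n^L)) = 1 + \sum_{j=1}^{n-1}(1-\cos^4(\pi j/n))^{-1}$, and since $1-\cos^4\theta = \sin^2\theta\,(1+\cos^2\theta)$ with $1+\cos^2\theta\in[1,2]$, this is within a factor of $2$ of $\sum_{j=1}^{n-1}\csc^2(\pi j/n) = (n^2-1)/3$, i.e.\ $\Theta(n^2)$ --- not the $\Theta(n^3)$ asserted in the proposition. Only the lowest Fourier modes are large: the $j$-th contributes $\Theta\bigl(n^2/\min\{j,n-j\}^2\bigr)$, so the sum is $\Theta(n^2)$ rather than $n$ times the top eigenvalue. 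Your plan to ``reconcile'' the order with the bound $n\bigl(1+\mathrm{Var}_0(A)\,\Theta(t_A(1/2))\bigr) = \Theta(n^3)$ of Theorem~\ref{equiv} cannot close this: that is a one-sided upper bound and is simply not tight here. The paper's own justification for the trace (``follows from symmetry'') supplies no further argument, so you must either carry your exact spectral formula to its $\Theta(n^2)$ conclusion and flag the discrepancy with the stated order, or locate where $\Theta(n^3)$ is supposed to come from; as written, your (correct) method contradicts that part of the claim rather than proving it.
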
	
Quite surprisingly it turns out that the star network exhibits no scaling with network dimension. The centrality in star topology lends fast ``mixing'' property. The cycle network has poor ``mixing'' properties and as a result the robustness measure scales quadratically with network dimension. Further these robustness measures are identical to convergence times for these networks (See~\cite{mc_mixing}).
\begin{proof}\\
\textbf{Star Network}\\
	\begin{align*}
		S^L_n= \begin{bmatrix}
			0.5  & \frac{1}{2(n-1)} & \dots & \frac{1}{2(n-1)} \\
			0.5       & 0.5 &  \dots & 0 \\
			\vdots & \vdots & \ddots & \vdots \\
			0.5       & 0 & \dots & 0.5
		\end{bmatrix}
	\end{align*}
Now, the projection of $S^L_n$ on $Q$ is 
\begin{align*}
		QS^L_n= \begin{bmatrix}
			0  & 0 & \dots & 0 \\
			0       & \frac{1}{2} - \frac{1}{2(n-1)} &  \dots & - \frac{1}{2(n-1)} \\
			\vdots & \vdots & \ddots & \vdots \\
			0      & -\frac{1}{2(n-1)} & \dots & \frac{1}{2} - \frac{1}{2(n-1)}
		\end{bmatrix}
\end{align*}
Now, 
\[
||P(QS^L_n)||_2  \leq 1 + ||QS^L_n||^2_2 ||P(QS^L_n)||_2
\]
This follows from 
\[
A P(A) A^T + I = P(A) \implies ||A^T P(A) A||_2 + 1 \geq ||P(A)||_2
\]
From the structure of $QS^L_n$ it follows that $||QS^L_n||_2 = 0.5$ and then it follows that $||P_{\Pi}(S^L_n)||_2 \leq 1 / (1 - ||QS^L_n||^2_2)$. Further, $||P(QS^L_n)||_2 \geq ||\Pi||_2 = 1$ and our claim follows.\\
\vspace{2mm}
\textbf{Cycle Network}\\
	\begin{align*}
		C^L_n= \begin{bmatrix}
			0.5  & 0.25 & \dots & 0 & 0.25 \\
			0.25  & 0.5 &  \dots & 0 & 0 \\
			\vdots & \vdots & \ddots &\vdots & \vdots \\
			0 & 0 & \dots & 0.5 & 0.25\\
			0.25 & 0 & \dots & 0 & 0.5
		\end{bmatrix}
	\end{align*}
\[
A^L_n = 1/2 I_{n \times n} + (1/4)(DL_n + DL_n^T)
\]
Here $DL_n$ is the matrix with $1$s on the off-diagonal. This is a circulant matrix (and symmetric). From~\cite{circulant}, we have that 
\[
\rho_2(A^L_{n}) = \cos^2{(\pi/n)}
\] 
Additionally, the eigenvalue corresponding to $\rho_2(A^L_{n})$ is positive $\implies ||P_{\Pi}(A_n)||_2 = \Theta((1 - \cos^2{(\pi/n)})^{-1}) = \Theta((\sin^2(\pi/n))^{-1}) = \Theta(n^2)$. The argument for trace follows from symmetry.
\end{proof}
The proof of the preceding propositions show that $t_{S_n^L}(1/2) = \Theta(1)$ and $t_{C_n^L}(1/2) = \Theta(n^2)$ (this follows from Section 5.3 in~\cite{mc_mixing}). To show that some of the bounds in Theorem~\ref{equiv} are tightly achieved we provide a plot of  $\frac{||P(QA)||_2}{t_A(1/2)}$ with the number of nodes in the network (Fig.~\ref{plot}). We observe that this ratio remains a constant for the two topologies. This is a general trend we observe for common examples. However, we have been unable to tighten our bounds for general consensus topologies.
\begin{figure}[h]
	\centering
	\includegraphics[width=0.5\textwidth]{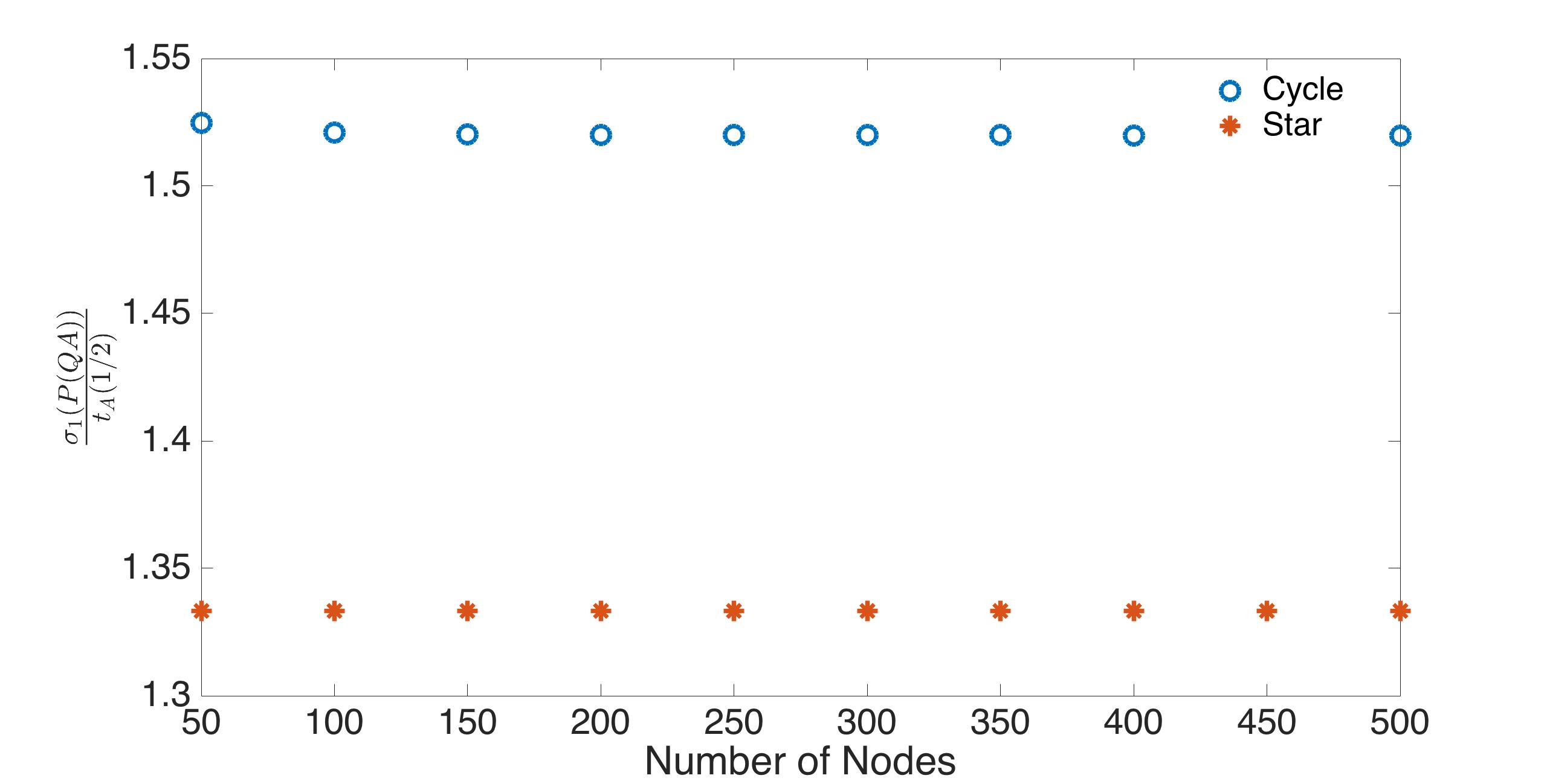}
		\caption{$\frac{\sigma_1(P(QA))}{t_A(1/2)}$ vs Number of Nodes}
	\label{plot}
\end{figure}

Next, we provide upper bounds on convergence times for a class of networks discussed in~\cite{olshevsky2013}. 
\begin{definition}
A consensus network, $A_n$, is an undirected flocking network if 
\begin{equation*}
A_n = D^{-1}_n \Abb_n
\end{equation*}
where $D_n$ is the matrix with degree of each node on the diagonal and $\Abb_n$ is the incidence matrix. Further, $\lbrack \Abb_{n} \rbrack_{ij} = \lbrack \Abb_{n} \rbrack_{ji}$.
\end{definition}

We will now provide a bound for the convergence time of a connected undirected flocking network. This result also readily follows from the Theorem 1 in~\cite{olshevsky2013}. 
\vspace{2mm}
\begin{proposition}
\label{flocking_result}
Any undirected flocking consensus network, $A_n$, is robust. Further $||P(QA_n)||= O(n^2)$.
\end{proposition}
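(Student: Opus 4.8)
The plan is to exploit that an undirected flocking network $A_n = D_n^{-1}\Abb_n$ (with $\Abb_n = \Abb_n^T$) is the transition matrix of a \emph{reversible} random walk, so that after a diagonal similarity it becomes symmetric and the whole question reduces to summing a geometric series in the spectral gap. Concretely, set $S_n := D_n^{1/2} A_n D_n^{-1/2} = D_n^{-1/2}\Abb_n D_n^{-1/2}$, which is symmetric; under the standing aperiodicity assumption (Remark~\ref{assumption_remark}) its eigenvalues obey $1 = \mu_1 > \mu_2 \ge \cdots \ge \mu_n > -1$, its Perron eigenvector is $v_1 = D_n^{1/2}\textbf{1}/\| D_n^{1/2}\textbf{1}\|$, and one checks directly that $D_n^{-1/2} v_1 v_1^T D_n^{1/2} = \textbf{1}\pi^T$ with $\pi_i = d_i/\sum_j d_j$ the stationary distribution; in particular $\pi_{\max}/\pi_{\min} = d_{\max}/d_{\min} \le n-1$.

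Next I would push this through the projected-gramian series. By Proposition~\ref{stable_projection}, $(QA_n)^k = Q A_n^k$, and since $Q\textbf{1}\pi^T = 0$ we have $(QA_n)^k = Q(A_n^k - \textbf{1}\pi^T) = Q D_n^{-1/2}(S_n^k - v_1 v_1^T) D_n^{1/2}$. Taking $\Lcal_2$-induced norms and using $\|Q\|_2 = 1$, $\|D_n^{1/2}\|_2\,\|D_n^{-1/2}\|_2 = \sqrt{d_{\max}/d_{\min}}$, and $\|S_n^k - v_1 v_1^T\|_2 = |\mu_\ast|^k$ with $|\mu_\ast| := \max(|\mu_2|,|\mu_n|) < 1$, gives $\|(QA_n)^k\|_2 \le \sqrt{d_{\max}/d_{\min}}\,|\mu_\ast|^k$. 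Since $QA_n$ is stable, $P(QA_n) = \sum_{k\ge 0}(QA_n)^k\big((QA_n)^k\big)^T$, so
\[
\|P(QA_n)\|_2 \le \sum_{k\ge 0}\|(QA_n)^k\|_2^2 \le \frac{d_{\max}/d_{\min}}{1-|\mu_\ast|^2}, \qquad \text{trace}(P(QA_n)) \le \frac{d_{\max}}{d_{\min}}\sum_{i\ge 2}\frac{1}{1-\mu_i^2}.
\]
Hence, as soon as $1 - |\mu_\ast(A_n)| \ge 1/q(n)$ for some polynomial $q$, both $\text{trace}(P(QA_n))$ and $\|P(QA_n)\|_2$ are polynomially bounded, which already gives robustness of $QA_n$; such a polynomial lower bound on the absolute spectral gap of a connected reversible walk is standard (via conductance/Poincaré, which yields $1-|\mu_\ast| = \Omega(n^{-3})$).

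To obtain the sharper $O(n^2)$ I would invoke Theorem~1 of~\cite{olshevsky2013}, which bounds the convergence rate of the equal-neighbor dynamics on a connected undirected graph and gives $1 - |\mu_\ast(A_n)| = \Omega(n^{-2})$ — equivalently $t_{A_n}(1/2) = O(n^2)$ through Theorem~\ref{projected_convergence}. Substituted into the display this yields $\|P(QA_n)\| = O(n^2)$ whenever $d_{\max}/d_{\min} = O(1)$. The main obstacle is the remaining factor $d_{\max}/d_{\min}$, which can be $\Theta(n)$: to keep the bound at $n^2$ for all flocking topologies one must refine the core estimate by expanding $S_n^k - v_1 v_1^T = \sum_{i\ge 2}\mu_i^k w_i w_i^T$ eigenvector by eigenvector and using that the slow eigenvectors (those with $|\mu_i|$ near $1$) are concentrated on the low-degree part of the graph, so that the weight $\|D_n^{-1/2}w_i\|\,\|D_n^{1/2}w_i\|$ is small exactly where it matters; combined with Olshevsky's $\Omega(n^{-2})$ rate this closes the gap. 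Everything outside this degree-localization argument is routine bookkeeping.
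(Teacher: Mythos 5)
Your overall skeleton matches the paper's: symmetrize $A_n$ to $S_n = D_n^{1/2}A_nD_n^{-1/2}$, reduce the gramian to a geometric series governed by the spectral gap of $S_n$, and pay a degree factor for undoing the similarity. The bookkeeping up to the display $\|P(QA_n)\|_2 \le (d_{\max}/d_{\min})\,(1-|\mu_\ast|^2)^{-1}$ is correct (the paper does the analogous step by transforming the Lyapunov equation $P = A_n^TPA_n + Q$ into $P_D = S_n^TP_DS_n + Q_D$ rather than expanding the power series, which is a cosmetic difference), and the first claim (robustness, i.e., a polynomial bound) does follow from any polynomial spectral gap, so that part is fine.

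The genuine gap is in the $O(n^2)$ claim. You feed in $1-|\mu_\ast| = \Omega(n^{-2})$, which makes the geometric series $O(n^2)$ and leaves the degree ratio $d_{\max}/d_{\min}$ (possibly $\Theta(n)$) unabsorbed, so your bound is $O(n^3)$ for unbalanced topologies. You recognize this, but the proposed repair --- that the slow eigenvectors of $S_n$ are ``concentrated on the low-degree part of the graph'' so the offending weights are small --- is not established anywhere and is not a standard fact; as written it is a conjecture, not a proof step, and it is where the argument would fail. The paper closes the argument differently: it invokes the bound $\lambda_2(S_n) \le 1 - 1/n$ on the second-largest-in-magnitude eigenvalue of a flocking matrix (citing \cite{cao2005lower}), which makes the geometric series only $O(n)$, so a single factor of $d_{\max} \le n$ suffices to reach $O(n^2)$ with no eigenvector-localization argument at all. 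If you substitute that gap estimate into your own display you are done immediately. (Separately, your reading of Theorem 1 of \cite{olshevsky2013} as giving $t_{A_n}(1/2) = O(n^2)$ conflicts with the paper's own reading of that result as an $O(n^3)$ convergence-time bound, which is another sign the $\Omega(n^{-2})$ gap you import is not the estimate you want.)
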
 
\begin{proof}
An undirected flocking matrix, as in~\cite{olshevsky2013}, has a symmetric adjacency matrix. Then, we know that $A_n = D_n^{-1}\Abb_n$ can be symmetrized to 
\[
S_n = D_n^{-1/2}\Abb_n D_n^{-1/2} = D^{1/2}_n A_n D^{-1/2}_n
\] 
Next note that $P^{'}(QA_n) = (QA_n)^T P^{'}(QA_n) (QA_n) + I$ can be alternately written $P = A_n^T P A_n + Q$, \textit{i.e.}, $P=P^{'}(QA_n)$. Note that this is different from out traditional definition of gramian but in terms of the trace they are identical, as discussed before. Then we have,
\begin{align*}
P &= A^T_n P A_n + Q \\
P &= A^T_n D_n^{1/2} D_n^{-1/2} P D_n^{-1/2}D_n^{1/2} A_n + Q \\
D^{-1/2}_n P D^{-1/2}_n &= S_n^T  D_n^{-1/2} P D^{-1/2}_n S_n + D^{-1/2}_n Q D^{-1/2}_n \\
P_D &= S_n^T P_D S_n + Q_D
\end{align*}
Here $Q_D$ is orthogonal to $D_n^{1/2} \lbrack 1, 1, \ldots, 1 \rbrack^T$, which is the eigenvector of $S_n$ corresponding to the eigenvalue $1$. Since $S_n$ is symmetric, the second greatest eigenvalue (in magnitude) will be in an orthogonal subspace, \textit{i.e.}, in $Q_D$. 

Since $S_n$ is symmetric the singular values of $S_n$ are the eigenvalues of $S_n$ (albeit some permutation). From \cite{cao2005lower}, we know that for a flocking matrix, $\lambda_2(S_n) \leq 1 - 1/n$, where $\lambda_2(A)$ is the second largest eigenvalue (in magnitude) of $A$. Then, we get that $\sigma_1(P_D) = O(n)$, further $D_n^{1/2}P_D D_n^{1/2} = P(QA_n) \implies \sigma_1(P(QA_n)) = O(n^2)$. 
\end{proof}
\vspace{2mm}
Note from Theorem 1 in~\cite{olshevsky2013}, we have that the convergence time is $O(n^3)$. Further, since it follows from Proposition~\ref{flocking_result} that $n||P(QA_n)||_2 = O(n^3)$, our bound in Theorem~\ref{equiv} is tight from below.

\section{Conclusions}
\label{sec:conclusion}
In this work we showed that convergence time for consensus networks can be represented as robustness in its projected counterpart. This gave us a general framework, one that is not limited by the type of consensus network, to estimate convergence times. The bounds obtained here do not assume knowledge of the invariant distribution. We believe that such a relation will help in analyzing cases when the network is time varying or random. For example, in~\cite{olshevsky2013} it was shown that there exist time varying networks where convergence is exponential. Such a result may follow from our framework in the following way -- since it is clear that exponential convergence time implies lack of robustness, then all we have to do is study the scaling of $\Hcal_2$--norm with network size. Another future direction may be to study network projection when the invariant distribution is known, \textit{i.e.}, when $Q$ is replaced by $Q_{\pi} = I - 1 \pi^T$ where $\pi$ is the invariant distribution. For such projections, we believe that the bounds in Theorem~\ref{equiv} can be made tighter.

\bibliographystyle{IEEEtran}
\bibliography{biblio.bib}

\section{Appendix}
\label{appendix}
\subsection{Results}
\label{additional_results}
For strong bounds on convergence times in terms of the gramian, we need projection operator $Q_{\pi}  = I - \textbf{1} \pi^T$. We briefly show that all the properties of the projection $QA$ can be summarized for $Q_{\pi}A$
\begin{itemize}
\item $Q_{\pi}A^{k} = (Q_{\pi}A)^k$. This follows because
\begin{align*}
Q_{\pi} A  = A Q_{\pi}
\end{align*}
\item $Q_{\pi}A$ is Schur stable. This follows from Theorem 4.9 in~\cite{mc_mixing}.
\begin{align*}
\lim_{k \rightarrow \infty}||Q_{\pi}A^{k}||^{1/k} < 1
\end{align*}
\item Theorem~\ref{projected_convergence} holds by definition.
\item $Q_{\pi}A^{m+n} = Q_{\pi}A^{m}Q_{\pi}A^{n} $. This follows because of the projection property
\begin{align*}
Q^2_{\pi} = Q_{\pi}
\end{align*}
\end{itemize}
\subsection{Proof of Theorem~\ref{equiv}}
\label{equiv_proof}
First, we show a lower bound on the convergence time. 
\begin{align*}
P(QA) &= I + \sum_{k=1}^{\infty}Q (A^{k})(A^k)^{T} Q^T\\
\text{trace}(P(QA)) &= n + \sum_{k=1}^{\infty}\sum_{i, j=1}^n|a^{(k)}_{ij} - (1/n)\sum_{i=1}a_{ij}|^2 \\
&\leq n(1 + \sup_{i}\sum_{k=1}^{\infty}\sum_{j=1}^n|a^{(k)}_{ij} - (1/n)\sum_{i=1}a_{ij}|^2) \\
&\leq n(1 + 2 ||QA||_{\text{max}} \times\\
&\sup_{i}\sum_{k=1}^{\infty}\sum_{j=1}^n|a^{(k)}_{i^{*}j} - (1/n)\sum_{l=1}^n a^{(k)}_{lj}|)
\end{align*}
Let $t^{*}$ be when $||QA^{t^{*}}||_{\infty} \leq 1/2$. By definition $t^{*} = \Theta(p(n))$, then it follows from the monotonicity of $||QA^k||_{\infty}$ in $k$ and the fact that $||QA^{m + n}||_{\infty} \leq ||QA^{n}||_{\infty} ||QA^{m}||_{\infty}$, 
\begin{align*}
r_{mt^{*}, i} &= \sum_{k=mt^{*}}^{(m+1)t^{*}}\sum_{j=1}^n|a^{(k)}_{ij} -(1/n)\sum_{i=1}a_{ij}|\\
r_{mt^{*}} &= \sup_{i} r_{mt^{*}, i} \leq (1/2)^{m} t^{*}
\end{align*}
Since we have,
\begin{align*}
\text{trace}(P(QA)) &\leq n(1 + 2 ||QA||_{\text{max}} \sum_{m=0}^{\infty}r_{mt^{*}}) \\
&\leq n(1 + \Theta(1) ||QA||_{\text{max}} t^{*})
\end{align*}
Then it follows that 
\[
(1/n)\text{trace}(P(Q_{\pi}A)) = 1 + ||QA||_{\text{max}} O(p(n))
\]
Consider the two scenarios -- when $||QA||_{\text{max}} = o(1/n)$ and $||QA||_{\text{max}} = \Omega(1/n)$. In the first case, when $||QA||_{\text{max}} = o(1/n)$, notice that $||QA||_{\infty} = o(1)$ and we are done by definition, \textit{i.e.}, consensus has been reached. The interesting case is when $||QA||_{\text{max}} = \Omega(1/n)$, we start ``away'' from consensus. Indeed, a similar bound could have been obtained by taking $Q_{\pi}$ instead of $Q$. It turns out that 
\[
\text{trace}(P(Q_{\pi}A)) \geq \text{trace}(P(QA))
\]
Thus the lower bound is tighter for $Q_{\pi}$.

For the other direction, we design a construction. First observe that, we need a lower bound $\sum_{j=1}^n|a_{ij} - (1/n)\sum_{i=1}^n a_{ij}|^2$. Although we have a lower bound for the $l_1$ by design $(> 1/4)$, it is not that trivial to obtain such a bound for the $l_2$ case. Consider again the gramian, for any $k$ we have
\begin{align*}
P(QA) \succeq \sum_{i=0}^{k}Q A^{i} (A^T)^{i}Q
\end{align*}
Now, the diagonal term of $[Q A^{t} (A^T)^{t}Q]_{mm} = \sum_{j=1}^n|a^{(t)}_{mj} - (1/n)\sum_{i=1}a_{ij}|^2$ is 
\begin{align*}
\sigma_1(P(QA)) &\geq 1 + \sup_{i}\sum_{k=1}^{\infty}\sum_{j=1}^n|a^{(k)}_{ij} - (1/n)\sum_{i=1}a_{ij}|^2 \\
&\geq 1 + \sum_{k=1}^{t^{*}} (1/4)n^{-1} \\
&\geq 1 + \Theta(p(n)) n^{-1}
\end{align*}
Here $t^{*}+1$ is the first time when $||QA||_{\infty} \leq 1/4$. Then the $l_2$--norm of the row with $l_1$--norm $> 1/4$ will be at least $(1/4) n^{-1}$ (By Jensen's Inequality). 

For an upper bound on $||P(QA)||_2$, we see that 
\begin{align*}
P(QA) &= \sum_{k=0}^{\infty} QA^k(A^k)^TQ \\
||P(QA)||_2 &\leq \sum_{k=0}^{\infty} ||QA^k(A^k)^TQ||_{2} \\
||P(QA)||_2 &\leq \sum_{k=0}^{\infty} ||Q A^k(A^k)^T Q||_{\infty} \\
||P(QA)||_2 &\leq \sum_{k=0}^{\infty} ||QA^k||_{\infty}||(A^k)^TQ||_{\infty}\\
||P(QA)||_2 &\leq 2\sum_{k=0}^{\infty} ||QA^k||_{\infty}||(A^k)^T||_{\infty}
\end{align*}
Now $||(A^k)^T||_{\infty} \leq (\pi_{\max}/\pi_{\min})$, this follows from the fact that 
\begin{align*}
\pi^T A^k  &= \pi^T \\
||\pi^T A^k||_{\infty} &\leq \pi_{\max} \\
||\pi_{\min}|| ||(A^{k})^{T}||_{\infty} &\leq \pi_{\max}
\end{align*}
Then our claim follows.
\end{document}